
\documentclass{superfri}
\usepackage{amssymb}
\usepackage{fancyvrb}
\usepackage{dsfont}
\usepackage{amsmath}
\usepackage{amsthm}
\usepackage{amsfonts}
\usepackage{marvosym}
\usepackage{multirow}

\usepackage{setspace}
\usepackage{graphicx}
\usepackage[hidelinks]{hyperref}

\bibliographystyle{splncs04}

\widowpenalty 10000
\clubpenalty 10000

\DeclareGraphicsExtensions{.pdf,.png,.jpg}
\graphicspath{{pics/},{plots/},
	{plots/casestudies/ECG/discords/},{plots/casestudies/ECG/heatmap/},
	{plots/casestudies/ECG2/discords/},{plots/casestudies/ECG2/heatmap/},
	{plots/casestudies/PolyTER/aud346/discords/},{plots/casestudies/PolyTER/aud346/heatmap/},
	{plots/casestudies/PolyTER/aud434/discords/},{plots/casestudies/PolyTER/aud434/heatmap/},
	{plots/casestudies/PolyTER/aud808/discords/},{plots/casestudies/PolyTER/aud808/heatmap/},
	{plots/casestudies/PowerDemand/discords/},{plots/casestudies/PowerDemand/heatmap/},
	{plots/casestudies/Respiration/discords/},{plots/casestudies/Respiration/heatmap/},
	{plots/casestudies/SpaceShuttle/discords/}{plots/casestudies/SpaceShuttle/heatmap/}
}

\usepackage{caption}
\usepackage{subcaption}

\usepackage{listings}
\usepackage{algorithm}
\usepackage[noend]{algpseudocode}

\usepackage[dvipsnames]{xcolor}
\usepackage{pgfplots} 
\usetikzlibrary{patterns}

\definecolor{green1}{HTML}{9FD4A3}
\definecolor{green2}{HTML}{AEDCAE}
\definecolor{green3}{HTML}{BEE3BA}
\definecolor{green4}{HTML}{CDEBC5}
\definecolor{green5}{HTML}{DDF2D1}
\definecolor{green6}{HTML}{ECFADC}

\usepackage{dsfont}

\DeclareUnicodeCharacter{2061}{}
\DeclareUnicodeCharacter{430}{}

\def\ParMERLIN{PALMAD}
\def\ParMERLINtranscript{Parallel Arbitrary Length MERLIN-based Anomaly Discovery}
\def\ParDRAG{PD3}
\def\ParDRAGtranscript{Parallel DRAG-based Discord Discovery}

\begin{document}
	
\author{Mikhail~L. Zymbler\footnote{\label{susu}South Ural State University, Chelyabinsk, Russian Federation}
	\and  Yana~A. Kraeva\footnoteref{susu}
}

\title{High-performance Time Series Anomaly Discovery \\on Graphics Processor}

\maketitle{}

\begin{abstract}

Currently, discovering subsequence anomalies in time series remains one of the most topical research problems. A subsequence anomaly refers to successive points in time that are collectively abnormal, although each point is not necessarily an outlier. Among a large number of approaches to discovering subsequence anomalies, the discord concept is considered one of the best. A time series discord is intuitively defined as a subsequence of a given length that is maximally far away from its non-overlapping nearest neighbor. Recently introduced the MERLIN algorithm discovers time series discords of every possible length in a specified range, thereby eliminating the need to set even that sole parameter to discover discords in a time series. However, MERLIN is serial and its parallelization could increase the performance of discords discovery. In this article, we introduce a novel parallelization scheme for GPUs, called \ParMERLIN{}, \ParMERLINtranscript{}. As opposed to its serial predecessor, \ParMERLIN{} employs recurrent formulas we have derived to avoid redundant calculations, and advanced data structures for the efficient implementation of parallel processing. Experimental evaluation over real-world and synthetic time series shows that our algorithm outperforms parallel analogs. We also apply \ParMERLIN{} to discover anomalies in a real-world time series employing our proposed discord heatmap technique to illustrate the results.

\keywords{time series, anomaly detection, discord, MERLIN, DRAG, parallel algorithm, GPU, CUDA}
\end{abstract}


\section*{Introduction}
\label{sec:Introduction}

Over past decades, time series data are ubiquitous in diverse spheres of a human's activity: industry, healthcare, science, social, and so on. Currently, discovering anomalies (or outliers as a synonym) in time series remains one of the most topical research problems. In a 
time series, 
point and subsequence anomalies can serve as the aims to be detected~\cite{DBLP:journals/csur/Blazquez-Garcia21}. The former defines a datum that deviates in a specific time instant when compared either to the other values in the time series or to its neighboring points. The latter refers to successive points in time whose collective behavior is unusual, although each observation individually is not necessarily a point anomaly. Subsequence anomaly detection is more challenging due to the need to take into account the subsequence length among other aspects~\cite{DBLP:journals/csur/Blazquez-Garcia21}. 

Among a wide spectrum of analytical and neural network-based approaches to time series subsequence anomaly detection~\cite{DBLP:journals/csur/Blazquez-Garcia21,DBLP:journals/access/ChoiYPY21}, the discord concept~\cite{DBLP:conf/cbms/LinKFH05} is considered one of the best~\cite{DBLP:journals/csur/ChandolaBK09,ChandolaCK09}. A time series discord is intuitively defined as a subsequence that is maximally far away from its non-overlapping nearest neighbor. Discords look attractive for an end-user since they require the only parameter to be specified, the subsequence length. However, application of discords is reduced by sensitivity to this single user choice. A straightforward solution of this problem, namely discovering discords of all the possible lengths and then selecting the best discords with respect to some measure looks computationally prohibitive. 

Nevertheless, recently introduced the MERLIN algorithm~\cite{DBLP:conf/icdm/NakamuraIMK20} can efficiently and exactly discover discords of every possible length in a specified range, being ahead of competitors in terms of accuracy and performance. Thus, MERLIN allows an end-user for removing the need to set even that above-mentioned sole parameter to discover discords in a time series. MERLIN employs repeated calls of the DRAG algorithm~\cite{DBLP:conf/icdm/YankovKR07} that discovers discords of a given length with a distance of at least $r$ to their nearest neighbors, and adaptive selection of the parameter~$r$. However, multiple calls of the above sub-algorithm result in calculations that are partially repeated and being executed at once would increase the performance of MERLIN. Furthermore, the MERLIN algorithm is serial and looks attractive for  parallelization to increase the performance of discords discovery.

In this study, we address the problem of parallelization the MERLIN algorithm for discovery of arbitrary length discords on GPU continuing our research on accelerating various time series mining tasks with parallel architectures and in-database time series analysis~\cite{DBLP:conf/rcdl/KraevaZ18a,Zymbler19,ZymblerK19,ZymblerGKK21,ZymblerPK19,ZymblerK20,ZymblerI21,ZymblerG22}. The article's contribution can be summarized as follows. We perform thorough review of works related to the discord-based approaches to discovering time series anomalies and their parallelization for diverse hardware platforms. Next, based on MERLIN~\cite{DBLP:conf/icdm/NakamuraIMK20}, we introduce the parallel algorithm \ParMERLIN{} (\ParMERLINtranscript{}) for discovery of arbitrary length discords on a graphics processor. \ParMERLIN{} employs recurrent formulas we have derived to avoid redundant calculations, and advanced data structures for the efficient implementation of parallel processing. Further, in the extensive experimental evaluation over real-world and synthetic time series, we show that our algorithm outperforms existent analogs. Finally, we apply \ParMERLIN{} to anomaly discovery in a real-world time series employing our proposed discord heatmap technique to illustrate the results. 

The remainder of the article is organized as follows. In Section~\ref{sec:RelatedWork}, we discuss related works. Section~\ref{sec:Preliminaries} contains notation and formal definitions along with a short description of the original serial algorithm. Section~\ref{sec:Approach} introduces the proposed parallel algorithm to discover time series discords on GPU. In Section~\ref{sec:Experiments}, we give the results and discussion of the experimental evaluation of our algorithm.  Section~\ref{sec:CaseStudies} describes a case study on discord discovery in a real-world time series. Finally, in \nameref{sec:Conclusion}, we summarize the results obtained and suggest directions for further research.

\section{Related Work}
\label{sec:RelatedWork}

The time series discord concept was introduced by Keogh \textit{et~al.}~\cite{DBLP:conf/cbms/LinKFH05}, and currently, is considered as one of the best analytical approaches to discovering anomalies in time series~\cite{DBLP:journals/csur/ChandolaBK09,ChandolaCK09}. Time series discord is intuitively defined as the subsequence of a time series that is the most distant to its non-overlapping nearest neighbor. Discords look attractive for an end-user since they require the only parameter to be specified, the subsequence length. Below, we consider discord-based research on discovering anomalies including studies addressed the parallelization of the approaches above.

In~\cite{DBLP:conf/cbms/LinKFH05}, Keogh \textit{et~al.} proposed the HOTSAX (Heuristically Ordered Time series using Symbolic Aggregate
Ap\-pro\-Xi\-ma\-tion) algorithm for discords discovery in a time series that can be entirely placed in RAM. HOTSAX employs time series encoding through the SAX technique~\cite{DBLP:conf/dmkd/LinKLC03} and the Euclidean distance. HOTSAX iterates through all the pairs of subsequences calculating the distance between them, and finds the maximum among the distances to the nearest neighbor. The algorithm employs a prefix trie~\cite{DBLP:journals/cacm/Fredkin60} to index the subsequences. When iterating, unpromising subsequences are discarded without calculating distances. The subsequence with a neighbor closer than the best-so-far maximum of distances to all the nearest neighbors is unpromising. HOTSAX exploits a certain heuristic that allows to discarding more unpromising candidates. Improvements of HOTSAX include \textit{i}SAX~\cite{DBLP:conf/kdd/ShiehK08} and HOT-\textit{i}SAX~\cite{DBLP:conf/kse/BuuA11} (indexable SAX), WAT~\cite{DBLP:conf/sdm/BuLFKPM07,DBLP:conf/adma/FuLKL06} (application of the Haar wavelets instead of SAX and augmented trie), HashDD~\cite{ThuyAC16} (employing a hash table instead of the prefix trie), HDD-MBR~\cite{ChauDA18} (application of R-trees), 
BitClusterDiscord~\cite{DBLP:journals/kbs/LiBJWW13} (employing clustering of the bit representation of subsequences), and HST (HOTSAX Time)~\cite{AvogadroD22} (reduction the size of the discord search space through the warm-up process and the similarity between subsequences close in time). 

In~\cite{DBLP:conf/edbt/Senin0WOGBCF15}, Senin \textit{et~al.} proposed the HOTSAX-based RRA (Rare Rule Anomaly) algorithm to discover variable-length discords. RRA deals with a time series discretized with SAX applying grammar-induction procedures. Since symbols infrequently used in grammar rules are non-repetitive and thus potentially unusual, the discords correspond to infrequent grammar rules that vary in length. Taking into account that the lengths of the subsequences vary, the distance between them is calculated by shrinking the longest subsequence with the Piecewise Aggregate Approximation (PAA)~\cite{DBLP:journals/kais/KeoghCPM01} to obtain subsequences of the same length. Although the RRA algorithm is a step forward from HOTSAX to  parameter-free discord discovering, as its predecessor, it is limited to RAM-stored time series. 

In~\cite{DBLP:conf/icdm/YankovKR07}, Yankov, Keogh \textit{et~al.} presented the DRAG (Discord Range Aware Gathering) algorithm for discovering discords in a time series stored on a disk rather than in RAM. DRAG introduces the range discord concept, where such a discord has a distance of at least $r$ to its non-overlapping nearest neighbor, and $r$ is a user-defined threshold. The DRAG algorithm performs in two phases, namely the candidate selection (collecting potential range discords) and discord refinement (discarding false positives), with each phase requiring one linear scan through the time series on the disk. In~\cite{Son20}, Son slightly improved DRAG through the employing a hash bucket data structure to speed up the candidate selection phase. The authors of DRAG 
proposed 
the following procedure to choose the parameter $r$. Through the uniform sampling, one can obtain a maximum length fragment of the original time series that fits in RAM. Next, the HOTSAX algorithm discovers a discord in the above-obtained fragment. Finally, the $r$ threshold is assumed to be equal to the distance of the discord found to its nearest neighbor. 

However, in the DRAG algorithm, the above-described heuristic does not define a formal way to choose the parameter $r$ to guarantee the efficiency of discords discovery~\cite{DBLP:conf/sigmod/MueenNL10}. Ideally, $r$ should be set in such a way that it is a little less than the distance between the  discord eventually found and its nearest neighbor~\cite{DBLP:conf/icdm/NakamuraIMK20}. Then the time and space complexity of DRAG is $O(mn)$, where $n$ is the length of the time series, and $m$ is the discord length. If the value of $r$ is set significantly less than the above-mentioned distance, then the algorithm will find the discord, but the time and spatial complexity will be higher, $O(n^2)$. Finally, if $r$ is greater than the above distance, then no discords will be found. In addition, DRAG (like HOTSAX as its predecessor) is not able to discover \emph{all} the discords in the following meaning: the algorithm finds discords of a single specified length, but not discords of every possible length in a specified range. In the latter case, a brute-force approach involving a cyclic runs of the DRAG algorithm for a specified range of the discord length does not work, since at each iteration of such a loop, we should choose the parameter $r$ from scratch.

Recently introduced by Keogh \textit{et~al.} the MERLIN algorithm~\cite{DBLP:conf/icdm/NakamuraIMK20} overcomes the above-described limitations of DRAG. MERLIN calls DRAG repeatedly and adaptively selects the parameter $r$. In the experiments, MERLIN efficiently and exactly discovers discords of every possible length, being ahead of competitors both in accuracy and performance~\cite{DBLP:conf/icdm/NakamuraIMK20}. The authors also empirically showed that MERLIN is able to discover point, contextual, and collective anomalies according to the taxonomy in~\cite{DBLP:journals/csur/ChandolaBK09}. Moreover, the authors mentioned that despite the recent explosion of the deep learning anomaly detection methods (e.g.,~\cite{DBLP:journals/ijon/AhmadLPA17,DBLP:conf/kdd/HundmanCLCS18,DBLP:conf/icmla/FarahaniCKKK19,DBLP:journals/access/MunirSDA19}), it is not obvious that they outperform discord-based approaches, since the former, by its nature, require many critical parameters to be set whereas the latter are domain-independent and require one intuitive parameter the need to set which can even be removed by MERLIN. However, the MERLIN algorithm is still serial, and its parallelization (for various hardware platforms) could increase the performance of discords discovery. In addition, let us mention that, in MERLIN, multiple calls of DRAG result in calculations that are partially repeated (e.g., normalization of the subsequences in a specified range of length) and being executed at once would also increase the performance of MERLIN.

Research addressed the problem of parallelization the discord discovery include the following. In~\cite{ZymblerPK19,Zymbler19}, Zymbler \textit{et~al.} proposed a parallelization schema for HOTSAX to discover discords with Intel many-core processors or GPUs through the OpenMP~\cite{DBLP:journals/pieee/SupinskiSDKBOTM18} or OpenACC~\cite{DBLP:journals/tjs/ReyesLFS13} technology, respectively. The algorithm employs matrix data layout to organize calculations with as many vectorizable loops as possible. Similarly to its predecessor, the algorithm differs the following sets of subsequences: ones with the least frequent SAX words and the rest, and for any subsequence~-- ones whose SAX words match the given subsequence's SAX word and the rest. When iterating all the subsequences through two nested loops, the algorithm parallelizes separately and differently for the outer and the inner loops, depending on the number of running threads and the cardinality of the above-mentioned sets. 

In~\cite{DBLP:journals/kais/YankovKR08} (an expanded version of~\cite{DBLP:conf/icdm/YankovKR07}), Yankov, Keogh \textit{et~al.} discussed the parallel version of DRAG based on the MapReduce paradigm~\cite{DBLP:conf/osdi/DeanG04}, and the key idea is as follows. Let the input time series be partitioned evenly across $P$ high-performance cluster nodes. Each node selects candidates in its own partition with the same parameter $r$ resulting in the local candidate set $\mathcal{C}_i$. Then the global candidate set $\mathcal{C}$ is constructed as $\mathcal{C}=\cup^P_{i=1}\mathcal{C}_i$ and sent to each cluster node. Next, a node refines candidates in its own partition taking the global candidate set $\mathcal{C}$ as an input, and produces the local refined candidate set $\tilde{\mathcal{C}}_i$. Finally, the global discords set $\mathcal{D}$ is computed as $\mathcal{D}=\cap^P_{i=1}\tilde{\mathcal{C}}_i$. In the experimental evaluation, the authors, however, just simulated the above-mentioned scheme on up to eight computers resulting in a close-to-linear speedup.

In~\cite{ZymblerGKK21}, Zymbler \textit{et~al.} introduced a parallelization scheme of DRAG for a high-performance cluster with Intel many-core processors. For the cluster node, the authors define matrix data structures and employ thread-level parallelism through the OpenMP technology~\cite{DBLP:journals/pieee/SupinskiSDKBOTM18} whereas communication among the cluster nodes is implemented through MPI (Message Passing Interface)~\cite{DBLP:journals/cacm/Snir18}. As 
opposed to DRAG parallelization scheme, at each cluster node, the authors first refine the local candidate set $\mathcal{C}_i$ with respect to the same parameter $r$ resulting in the $\tilde{\mathcal{C}}_i$ set, and then construct the global candidate set as $\mathcal{C}=\cup^P_{i=1}\tilde{\mathcal{C}}_i$ relying on the fact that a candidate is not a true discord if it was pruned by at least one cluster node during the selection phase. In the experiments~\cite{ZymblerGKK21}, the authors showed that such a technique allows for significant reduction of the global candidate set and increasing the overall algorithm's performance. The algorithm also significantly outperforms the following DRAG-based parallel discords discovery algorithms for high-performance clusters: DDD (Distributed Discord Discovery)~\cite{DBLP:conf/hpcc/WuZHLLL15} and PDD (Parallel Discord Discovery)~\cite{DBLP:conf/pakdd/HuangZMLLWHD16}. The above-mentioned competitors are far behind since the fact that they involve intensive data exchanges across cluster nodes. However, the above-described parallelization still cannot efficiently discover discords of every possible length in a specified range.

In the review, we should also mention the matrix profile (hereinafter MP) concept proposed by Keogh \textit{et~al.}~\cite{DBLP:journals/datamine/YehZUBDDZSMK18}. For a given time series, MP can informally be defined as a time series, where the $i$\nobreakdash-th element is the distance from the $i$\nobreakdash-th subsequence of the original time series to its non-overlapping nearest neighbor. MP plays the role of a building block the solutions of various time series motif discovery related problems are based on (semantic motifs~\cite{DBLP:conf/icdm/ImaniK19}, snippets~\cite{DBLP:journals/datamine/ImaniMDCK20}, chains~\cite{DBLP:journals/kais/ZhuINK19}, etc.). According to the above definition, top\nobreakdash-$k$ discords can be discovered as a by-product of the MP calculation since they are the subsequences on which the top-$k$ maximum values in MP are achieved. However, the time complexity of MP computation is high, namely $O(n^2)$ (where $n$ is the time series length)~\cite{DBLP:journals/datamine/YehZUBDDZSMK18,DBLP:conf/icdm/0014YZKK18}, so straightforward employing of MP in the discords discovery results in low performance as it has been evaluated in the following experiments. The serial SCRIMP algorithm~\cite{DBLP:journals/datamine/YehZUBDDZSMK18} is inferior to MERLIN~\cite{DBLP:conf/icdm/NakamuraIMK20}. Parallel MP algorithms for graphics processor and high-performance cluster, GPU\nobreakdash-STAMP~\cite{DBLP:journals/datamine/YehZUBDDZSMK18} and MP\nobreakdash-HPC~\cite{Pfeilschifter19}, respectively, are inferior to the parallel discords discovery algorithm for high-performance cluster with Intel many-core processors proposed in~\cite{ZymblerGKK21}.

In~\cite{ThuyAC21}, Thuy \textit{et~al.} introduced the notion of the  $K$\nobreakdash-distance discord, namely a subsequence with the largest sum of distances to its non-overlapping $K$ nearest neighbors. Such an approach aims at solving so-called ``twin freak'' problem~\cite{WeiKX06} when a discord fails to discover an anomalous (rare) subsequence if it occurs more than once in the time series, and is a modification of the J\nobreakdash-distance discord~\cite{HuangZWS15} concept, where the distance between a subsequence and its $k$\nobreakdash-th non-overlapping nearest neighbor is employed. The authors also presented the KBF\_GPU (Brute-Force for K\nobreakdash-distance discord)  algorithm that accelerates $K$\nobreakdash-distance discord discovery on a graphics processor. KBF\_GPU iterates all the subsequences of a given time series through two nested loops where the inner loop is parallelized and adapted to calculate sum of distances. In the experiments, the authors, however, compare their algorithm only with serial HOTSAX~\cite{DBLP:conf/cbms/LinKFH05}, and the latter, as expected, is significantly inferior to KBF\_GPU. 

In~\cite{DBLP:journals/tpds/ZhuJGD21}, Zhu \textit{et~al.} presented a parallel algorithm to accelerate discords discovery with GPU. The authors exploit the normalized Euclidean distance and its efficient calculation through the Pearson correlation by the technique proposed in~\cite{DBLP:conf/sigmod/MueenNL10}. To provide high performance of discord discovery, the algorithm employs two computational patterns. The first one prescribes the following two-step procedure. First, calculate the minimum distance between the discord candidate subsequence and all other subsequences of the time series that do not overlap the candidate. Then, find a candidate on which the maximum distance among all the candidates is achieved. The second pattern assumes an early stop of calculations in the pattern above when the distance between the candidate and a certain subsequence is less than the best-so-far distance. In such a case, both the candidate and the subsequence are obviously not discords, and we do not need to calculate distances from the candidate to other non-overlapping subsequences. In the experiments~\cite{DBLP:journals/tpds/ZhuJGD21}, the proposed algorithm outruns SCAMP~\cite{DBLP:conf/cloud/ZimmermanKSCFBK19} that is currently the fastest parallel algorithm for calculating the matrix profile. However, the proposed computational patterns limit the result to a single (albeit the most important) discord of the time series, whereas the above-described algorithms are based on the range discord concept and able to discover top-$k$ discords, where the parameter $k$ is prespecified by an expert in the subject domain.

Concluding our overview of related work, it can be seen that, currently, the MERLIN algorithm~\cite{DBLP:conf/icdm/NakamuraIMK20} based on the range discord concept~\cite{DBLP:conf/icdm/YankovKR07} is one of the most promising  approaches to discover anomalies in time series. Moreover, being analytical and agnostic, MERLIN is at least competitive with deep learning methods. However, parallelization of MERLIN could increase the performance of discord discovery. Such parallelization is a topical issue since, to the best of our knowledge, no research has addressed the accelerating discovery of discords of every possible length with GPU or any other parallel hardware architecture.


\section{Preliminaries}
\label{sec:Preliminaries}

Prior to detailing the proposed parallel algorithm for discords discovery, in Sections~\ref{subsec:NotationDefinitions} and~\ref{subsec:MERLINDRAG}, we introduce basic notation and formal definitions according to~\cite{DBLP:conf/icdm/NakamuraIMK20,DBLP:conf/icdm/YankovKR07} and give an overview of the original serial algorithms MERLIN and DRAG our development is based on, respectively.


\subsection{Notation and Definitions}
\label{subsec:NotationDefinitions}

A \emph{time series} is a chronologically ordered sequence of real-valued numbers: 
\begin{equation}
	\label{eq:TimeSeries}
	T=\{t_i\}_{i=1}^{n}, \quad t_i \in \mathds{R}.
\end{equation}
The length of a time series, $n$, is denoted by $\lvert T\rvert$. Hereinafter, we assume that the time series $T$ fit into the main memory.

A \emph{subsequence} $T_{i,\,m}$ of a time series $T$ is its subset of $m$ successive elements that starts at the $i$-th position:
\begin{equation}
	\label{eq:Subsequence}
	T_{i,\,m}=\{t_k\}_{k=i}^{i+m-1}, \quad 1 \leq i \leq n-m+1, \quad 3 \leq m \ll n.  
\end{equation}
We denote the set of all $m$-length subsequences in $T$ by $S^m_T$. Let $N$ denotes the number of subsequences in $S^m_T$, i.e., $N=|S^m_T|=n-m+1$.

A \emph{distance function} for any two $m$-length subsequences is a nonnegative and symmetric function $\operatorname{Dist}: \mathds{R}^m \times \mathds{R}^m \rightarrow \mathds{R}$.

Given a time series $T$ and its two subsequences $T_{i,\,m}$ and $T_{j,\,m}$, we say that they are \emph{non-self match} to each other at distance $\operatorname{Dist}(T_{i,\,m},\;T_{j,\,m})$ if $|i-j| \geq m$. Let us denote a non-self match of a subsequence $C \in S^m_T$ by $M_C$.

Given a time series $T$, its subsequence $D \in S^m_T$ is said to be the \emph{discord} if $D$ has the largest distance to its nearest non-self match. Formally speaking, the discord $D$ meets the following:
\begin{equation}
	\label{eq:Discord} 
	\forall C \in S^m_T \quad \min\bigl(\operatorname{Dist}(D,\,M_D)\bigr) > \min \bigl(\operatorname{Dist}(C,\,M_C)\bigr).
\end{equation}

The definition above is generalized from top-1 to top-$k$ discord as follows: $D \in S^m_T$ is said to be the $k$-th discord if the distance to its $k$-th nearest non-self match is the largest.

Given the positive real number $r$, the discord at a distance at least $r$ from its nearest non-self match is called the \emph{range discord}. That is, the range discord $D$ with respect to the parameter $r$ meets the following: $\min\bigl(\operatorname{Dist}(D,\,M_D)\bigr) \geq r$.

The MERLIN~\cite{DBLP:conf/icdm/NakamuraIMK20} and DRAG~\cite{DBLP:conf/icdm/YankovKR07} algorithms deal with subsequences of the time series that previously z\nobreakdash-normalized to have mean zero and a standard deviation of one. Here, z\nobreakdash-normalization of a subsequence $X=\{x\}_{i=1}^{m} \in S^m_T$  is defined as a subsequence $\hat{X}=\{\hat{x}\}_{i=1}^{m}$, where
\begin{equation}
	\label{eq:zNormalization}
	\begin{gathered}
		\hat{x}_i = \frac{x_{i}-\mu_{X}}{\sigma_{X}}, \quad 	
		\mu_{X}=\frac{1}{m}\sum\limits_{i=1}^{m} x_{i}, \quad 
		\sigma^2_{X}=
		{\frac{1}{m}\sum\limits_{i=1}^{m}x_i^2-\mu^2}.
	\end{gathered}
\end{equation}	

Both MERLIN and DRAG algorithms employ the Euclidean metric as the $\operatorname{Dist}(\cdotp,\cdotp)$ function to measure the distance between  subsequences that is defined as follows. Let us have $X,Y \in S^m_T$, then the Euclidean distance between the subsequences is calculated as below:  
\begin{equation}
	\label{eq:Euclid}
	\operatorname{ED}(X,\;Y)=\sqrt{\sum\limits_{i=1}^m(x_{i}-y_{i})^2}.
\end{equation}

In our study, being motivated by the highest possible performance of discord discovery, we employ the square of the Euclidean metric as a distance function. For the sake of simplicity, we denote by $\operatorname{ED_{norm}}$ the Euclidean distance between two z\nobreakdash-normalized subsequences: $\operatorname{ED_{norm}}(X,\,Y)=\operatorname{ED}(\hat{X}, \, \hat{Y})$. To compute $\operatorname{ED^2_{norm}}$, we further employ the following technique proposed in~\cite{DBLP:conf/sigmod/MueenNL10} that allows for faster calculation than in Equation~\ref{eq:Euclid}:
\begin{equation}
	\label{eq:EDnorm}
	\operatorname{ED^2_{norm}}(X,\,Y) = 2m\left(1-\dfrac{X \cdotp Y - m \cdotp \mu_X \cdotp \mu_Y}{m \cdotp \sigma_X \cdotp \sigma_Y}\right),
\end{equation}
where $X \cdotp Y$ denotes the scalar product of vectors $X,Y \in \mathds{R}^m$.


\subsection{MERLIN and DRAG Algorithms}
\label{subsec:MERLINDRAG}

\begin{algorithm}[!ht]
	\caption{\textsc{MERLIN}
		(\textsc{in} $T$, $\textit{minL}$, $\textit{maxL}$, $\textit{topK}$; \textsc{out} $\mathcal{D}$)
	}
	\begin{algorithmic}[1]
		\State{$\mathcal{D} \leftarrow \varnothing$; $r \leftarrow 2\sqrt{\textit{minL}}$; $\textit{nnDist}_{\textit{minL}} \leftarrow -\infty$}
		\While{$\textit{nnDist}_{\textit{minL}}<0$  \textbf{and} $|D_{\textit{minL}}| < \textit{topK}$}
			\State{$D_{minL} \leftarrow \hyperref[alg:DRAG]{\textsc{DRAG}}(T, minL, r)$; $\mathcal{D} \leftarrow \mathcal{D} \cup D_{minL}$; $\textit{nnDist}_{\textit{minL}} \leftarrow \min\limits_{d \in D_{\textit{minL}}} d.\textit{nnDist}$}
			\State{$r \leftarrow 0.5 \cdot r$}
		\EndWhile
		\For{$i \leftarrow \textit{minL}+1$ \textbf{to} $\textit{minL}+4$}
			\State{$\textit{nnDist}_{i} \leftarrow -\infty$}
			\While{$\textit{nnDist}_{i}<0$  \textbf{and} $|D_i| < \textit{topK}$}
				\State{$r \leftarrow 0.99 \cdot nnDist_{i-1}$}
				\State{$D_{i} \leftarrow \hyperref[alg:DRAG]{\textsc{DRAG}}(T, i, r)$; $\mathcal{D} \leftarrow \mathcal{D} \cup D_{i}$; $\textit{nnDist}_{i} \leftarrow \min\limits_{d \in D_{i}} d.\textit{nnDist}$}				
				\State{$r \leftarrow 0.99 \cdot r$}
			\EndWhile
		\EndFor
		\For{$i \leftarrow \textit{minL}+5$ \textbf{to} $\textit{maxL}$}
			\State{$\mu \leftarrow \operatorname{Mean}(\{\textit{nnDist}_k\}^{i-5}_{k=i-1})$; $\sigma \leftarrow \operatorname{Std}(\{\textit{nnDist}_k\}^{i-5}_{k=i-1})$; $r \leftarrow \mu-2\sigma$}
			\State{$D_{i} \leftarrow \hyperref[alg:DRAG]{\textsc{DRAG}}(T, i, r)$; $\mathcal{D} \leftarrow \mathcal{D} \cup D_{i}$; $\textit{nnDist}_{i} \leftarrow \min\limits_{d \in D_{i}} d.\textit{nnDist}$}
			\While{$nnDist_{i}<0$  \textbf{and} $|D_i| < \textit{topK}$}
				\State{$D_{i} \leftarrow \hyperref[alg:DRAG]{\textsc{DRAG}}(T, i, r)$; $\mathcal{D} \leftarrow \mathcal{D} \cup D_{i}$; $\textit{nnDist}_{i} \leftarrow \min\limits_{d \in D_{i}} d.\textit{nnDist}$}
				\State{$r \leftarrow r - \sigma$}
			\EndWhile
		\EndFor
		\State\Return{$\mathcal{D}$}
	\end{algorithmic}		
	\label{alg:MERLIN}
\end{algorithm}

Alg.~\ref{alg:MERLIN} depicts a pseudo code of MERLIN~\cite{DBLP:conf/icdm/NakamuraIMK20} (up to discovering top-$k$ discords of each length instead of all ones in the specified length range). Hereinafter, let us have an $n$\nobreakdash-length time series $T$, and we are to find a set $\mathcal{D}$ of its discords that have a length in the range $\textit{minL}..\textit{maxL}$ (where $\textit{minL} \le \textit{maxL} \ll n$), so that $\mathcal{D}=\cup_{m=\textit{minL}}^{\textit{maxL}}D_m$, where $D_m$ denotes a subset of $m$\nobreakdash-length discords. The distance to the $d \in \mathcal{D}$ discord's nearest neighbor is denoted by $d.\textit{nnDist}$.

The algorithm prescribes the following procedure to select the parameter $r$. Discords are discovered sequentially, starting from a minimum length of the specified discord range to a maximum one. At each step, MERLIN calculates the arithmetic mean $\mu$ and the standard deviation $\sigma$ of the last five distances from the discords found to their nearest neighbors, and then calls the DRAG algorithm passing it the parameter $r=\mu-2\sigma$. If DRAG has not found a discord, then $\sigma$ is subtracted from $r$ until DRAG stops successfully (i.e., a discord will be found). For the first five discord lengths, the parameter $r$ is set as follows. For discords of minimum length $\textit{minL}$, the parameter is set as $r=2\sqrt{\textit{minL}}$ since it is the maximum possible distance between any pair of $\textit{minL}$-length subsequences, and then $r$ is reduced by half until DRAG with such a parameter results in success. To obtain the next four discord lengths, the algorithm takes the distance from the discord to its nearest neighbor obtained in the previous step, minus a small value equal to 1\%. The subtraction of an additional 1\% proceeds until the discord discovery with such a parameter results in success. For a detailed explanation of the above-described procedure, we refer the reader to the original work~\cite{DBLP:conf/icdm/NakamuraIMK20}.

\begin{algorithm}[!ht]
	\caption{\textsc{DRAG}
		(\textsc{in} $T$, $m$, $r$; \textsc{out} $\mathcal{D}$)
	}	\begin{minipage}[t]{0.45\columnwidth}%
		\centering\textsc{Phase~1. Select candidates }
		\begin{algorithmic}[1]
			\State{$\mathcal{C} \leftarrow \{T_{1,\,m}\}$}
			\ForAll{$s \in S^m_T \smallsetminus T_{1,\,m}$}
			\State{$\textit{isCand} \leftarrow$ \texttt{TRUE}}
			\ForAll{$c \in \mathcal{C} \; \textbf{and} \; c \in M_s$}
			\If{$\hyperref[eq:Euclid]{\operatorname{ED}}(s,c)<r$}
			\State{$\mathcal{C} \leftarrow \mathcal{C} \smallsetminus c$}
			\State{$\textit{isCand} \leftarrow$ \texttt{FALSE}}
			\EndIf
			\EndFor
			\If{$ \textit{isCand} $}
			\State{$\mathcal{C} \leftarrow \mathcal{C} \cup s$}
			\EndIf
			\EndFor
			\State\Return{$\mathcal{C}$}
		\end{algorithmic}
	\end{minipage}
	\hfill\vrule\hfill
	\begin{minipage}[t]{0.54\columnwidth}%
		\centering\textsc{Phase~2. Refine discords }
		\begin{algorithmic}[1]
			\State{$\mathcal{D} \leftarrow \varnothing$; $\forall c \in \mathcal{C} \; c.\textit{nnDist} \leftarrow +\infty$}
			\ForAll{$s \in S^m_T$}
				\ForAll{$c \in \mathcal{C} \; \textbf{and} \; c \in M_s \; \textbf{where} \; s \neq c$}
					\State{$dist \leftarrow \operatorname{EarlyAbandon\hyperref[eq:Euclid]{\operatorname{ED}}}(s,c)$}
					\If{$dist < r$}
						\State{$\mathcal{C} \leftarrow \mathcal{C} \smallsetminus c$}
					\Else					
						\State{$\mathcal{D} \leftarrow \mathcal{D} \cup c$}
						\State{$c.\textit{nnDist} \leftarrow \operatorname{min}(c.\textit{nnDist}, \textit{dist})$}
					\EndIf
				\EndFor
			\EndFor
			\State\Return{$\mathcal{D}$}
		\end{algorithmic}
	\end{minipage}
	\label{alg:DRAG}
\end{algorithm}

The DRAG algorithm~\cite{DBLP:conf/icdm/YankovKR07} (see Alg.~\ref{alg:DRAG}) performs in two phases, namely the candidate selection and discord refinement, where it collects potential range discords and discards false positives, respectively. At the first phase, DRAG scans through the time series $T$, and for each subsequence $s \in S^m_T$ it validates the possibility for each candidate $c$ already in the candidate set $\mathcal{C}$ to be discord. If a candidate $c$ fails the validation, then it is removed from this set. In the end, the new $s$ is either added to the candidates set, if it is likely to be a discord, or it is pruned. At the second phase, the algorithm initially sets distances of all candidates to their nearest neighbors to positive infinity. Then, DRAG scans through the time series $T$, calculating the distance between each subsequence $s \in S^m_T$ and each candidate $c$. When calculating $\operatorname{ED}(s,c)$, the $\operatorname{EarlyAbandonED}$ procedure stops the summation of $\sum^m_{k=1}(s_k-c_k)^2$ if it reaches $k=\ell$, such that $1 \leq \ell \leq m$ for which $\sum^\ell_{k=1}(s_k-c_k)^2 \ge c.nnDist^2$. If the distance is less than $r$ then the candidate is false positive and permanently removed from $\mathcal{C}$. If the above-mentioned distance is less than the current value of $c.nnDist$ (and still greater than $r$, otherwise it would have been removed) then the current distance to the nearest neighbor is updated. The correctness of the above-described procedure is proved in the original work~\cite{DBLP:conf/icdm/YankovKR07}.


\section{Arbitrary Length Discords Discovery with GPU}
\label{sec:Approach}

Currently, GPU (Graphics Processing Unit)~\cite{DBLP:conf/iwmm/Kirk07} is one of the most popular many-core hardware platforms. GPU fits well for SIMD (Single Instructions Multiple Data) computations being composed of symmetric streaming multiprocessors, each of wich, in turn, consists of symmetric CUDA (Compute Unified Device Architecture) cores. CUDA API (Application Programming Interface) makes it possible to assign multiple threads to execute the same set of instructions over multiple data. In CUDA, all threads form a \emph{grid} consisting of blocks. In a \emph{block}, threads are divided into \emph{warps}, logical groups of 32~threads. The block's threads run in parallel and communicate with each other through shared memory. A CUDA function is called a \emph{kernel}. When run a kernel on GPU, an application programmer specifies both the number of blocks in the grid and the number of threads in each block.


Below, in Section~\ref{subsec:Approach-ParMERLIN}, we introduce general architecture and data structures of \ParMERLIN{}, the parallel algorithm to discover discords of every possible length in a specified range on GPU that is based on the original serial MERLIN algorithm~\cite{DBLP:conf/icdm/NakamuraIMK20}. \ParMERLIN{} employs \ParDRAG{} (\ParDRAGtranscript{})~\cite{KraevaZ23}, our parallel version of the original serial DRAG algorithm~\cite{DBLP:conf/icdm/YankovKR07}. Similarly to the original serial algorithm, \ParDRAG{} performs in two phases, where each phase is parallelized separately from the other. In Sections~\ref{subsec:Approach-ParDRAG-Selection} and~\ref{subsec:Approach-ParDRAG-Refinement}, we discuss the parallelization of the candidate selection and discords refinement phases, respectively.


\subsection{General Architecture and Data Structures}
\label{subsec:Approach-ParMERLIN}

\subsubsection{Avoiding Redundant Calculations}
\label{subsubsec:Approach-ParMERLIN-Reduce}

Basically, our parallel algorithm follows the computational scheme of its serial predecessor (see Alg.~\ref{alg:MERLIN}). \ParMERLIN{} employs repeated calls of \ParDRAG{}, our designed parallel version of the serial DRAG algorithm (see Alg.~\ref{alg:DRAG} and lines~3, 9, 13, and 15 in Alg.~\ref{alg:MERLIN}) for a graphics processor. As opposed to the original algorithm, \ParMERLIN{} avoids redundant calculations in iterative calls of DRAG when the subsequence length is one more than at the previous step. Indeed, to calculate the distance between any two candidate subsequences, we need to partially repeat alike calculations regarding subsequences that are one less length self-matches to the candidates above. More formally, for any $i,\,j$ ($1 < i \leq n-m$ and $3 \leq m \ll n$), when calculating $\operatorname{ED^2_{norm}}(T_{i,\,m}, T_{j,\,m})$ through Equation~\ref{eq:EDnorm} from scratch, we partially repeat calculations of both mean values $\mu_{T_{i,\,m-1}}$ and $\mu_{T_{i,\,m-1}}$, and standard deviations $\sigma_{T_{i,\,m-1}}$ and $\sigma_{T_{i,\,m-1}}$ through Equation~\ref{eq:zNormalization}.

To avoid the overhead above, we employ the vectors $\bar{\mu}, \bar{\sigma} \in \mathds{R}^{n-minL+1}$, namely the mean values and standard deviations of all the given time series subsequences of the given length, respectively. In these vectors, first $n-m+1$ elements are processed, where $m$ is the given subsequence length ($\textit{minL} \leq m \leq \textit{maxL}$), and the rest ones are left unattended. 

For the $\textit{minL}$\nobreakdash-length subsequences, these vectors are calculated according to Equation~\ref{eq:zNormalization} once before the very first call of \ParDRAG{} (see line~3 in Alg.~\ref{alg:MERLIN}) whereas for the rest values of $m$, the vectors $\bar{\mu}$ and $\bar{\sigma}$ are updated before the each further call of \ParDRAG{} (see lines 9, 13, and 15 in Alg.~\ref{alg:MERLIN}) according to the following recurrent formulas:
\begin{eqnarray} 
	\label{eq:MuRecurrent}
	\mu_{T_{i,\,m+1}}&=&\dfrac{1}{m+1}\bigl(m\mu_{T_{i,\,m}} +t_{i+m}\bigr), \\
	\label{eq:SigmaRecurrent}	
	\sigma^2_{T_{i,\,m+1}}&=&\dfrac{m}{m+1} \Bigl( \sigma_{T_{i,\,m}}^2 + \dfrac{1}{m+1} \bigl(\mu_{T_{i,\,m}}-t_{i+m}\bigr)^2\Bigr).
\end{eqnarray}	
In order not to overload the article's text with details, we present the lemma with the proof of Equations~\ref{eq:MuRecurrent} and~\ref{eq:SigmaRecurrent} in \nameref{sec:Appendix}.

Initial calculation and update of the vectors $\bar{\mu}$ and $\bar{\sigma}$ are implemented as CUDA kernels. We form a grid of $N$ threads where the number of threads in each block is the algorithm's parameter that is set as a multiple of the GPU warp size. Each thread calculates elements of the vectors $\bar{\mu}$ and $\bar{\sigma}$ according to Equations~\ref{eq:MuRecurrent} and~\ref{eq:SigmaRecurrent}.

\subsubsection{Data Structures and Segmentation}
\label{subsubsec:Approach-ParMERLIN-DataStructures}

\begin{figure}[ht!]
	\centering
	\includegraphics[width=\linewidth]{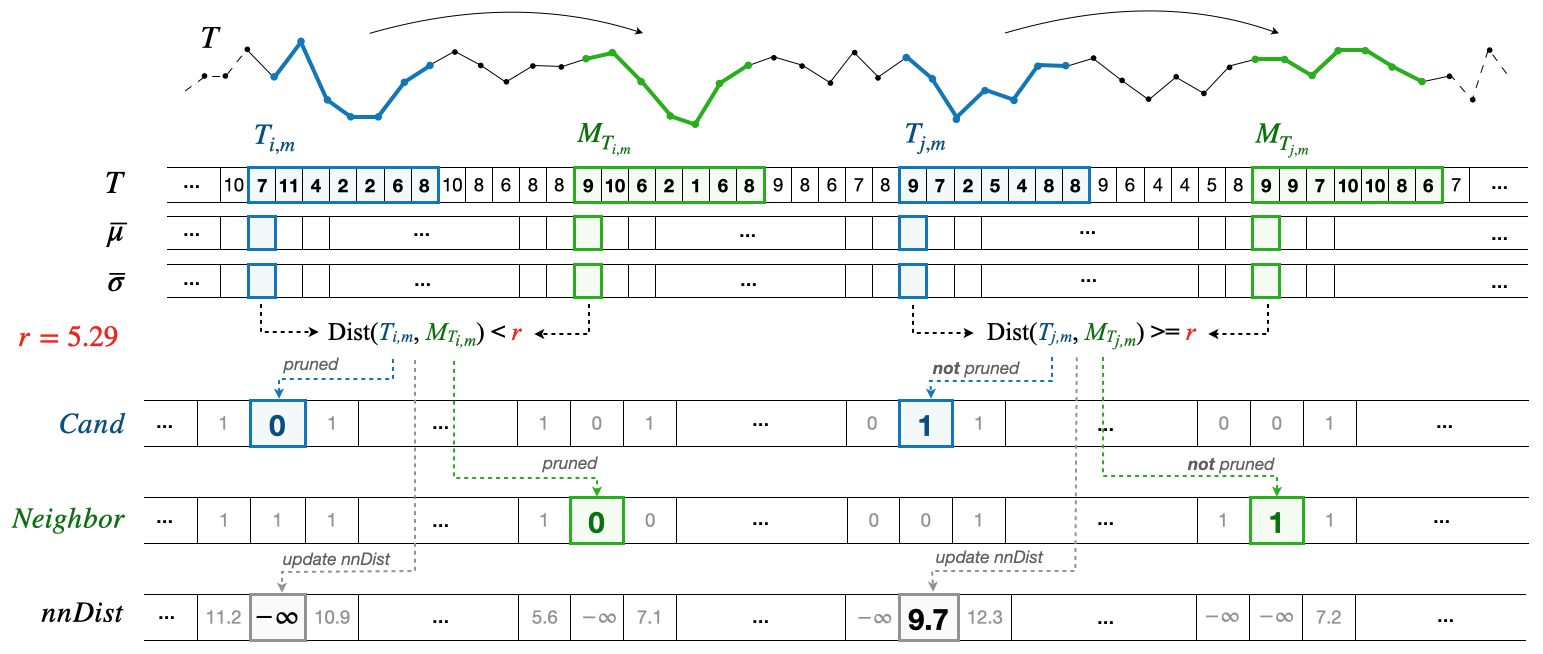}
	\caption{Data structures}
	\label{fig:DataStructures}
\end{figure}

In Fig.~\ref{fig:DataStructures}, we depict basic data structures involved in computations. The real-valued vector $\textit{nnDist} \in \mathds{R}^N$, for a subsequence of the input time series contains distance to its nearest neighbor: $\textit{nnDist}(i)=\min\bigl(\operatorname{ED^2_{norm}}(T_{i,\,m},M_{T_{i,\,m}})\bigr)$.
Two boolean-valued vectors $\textit{Cand}, \textit{Neighbor} \in \mathds{B}^N$ are the bitmaps for the subsequences and their nearest neighbors, respectively: $\textit{Cand}(i)=\texttt{TRUE}$ (or $\textit{Neighbor}(i)=\texttt{TRUE}$, respectively) if the subsequence $T_{i,\,m}$ (or its nearest neighbor, respectively) is a discord, and \texttt{FALSE} otherwise. These bitmaps are initialized with \texttt{TRUE} values. Further, we employ element-wise conjunction of the bitmaps above to discard more candidates during processing relying on the obvious fact that a subsequence that is not a discord cannot have a nearest neighbor that is a discord.

\begin{figure}[ht!]
	\centering
	\includegraphics[width=\linewidth]{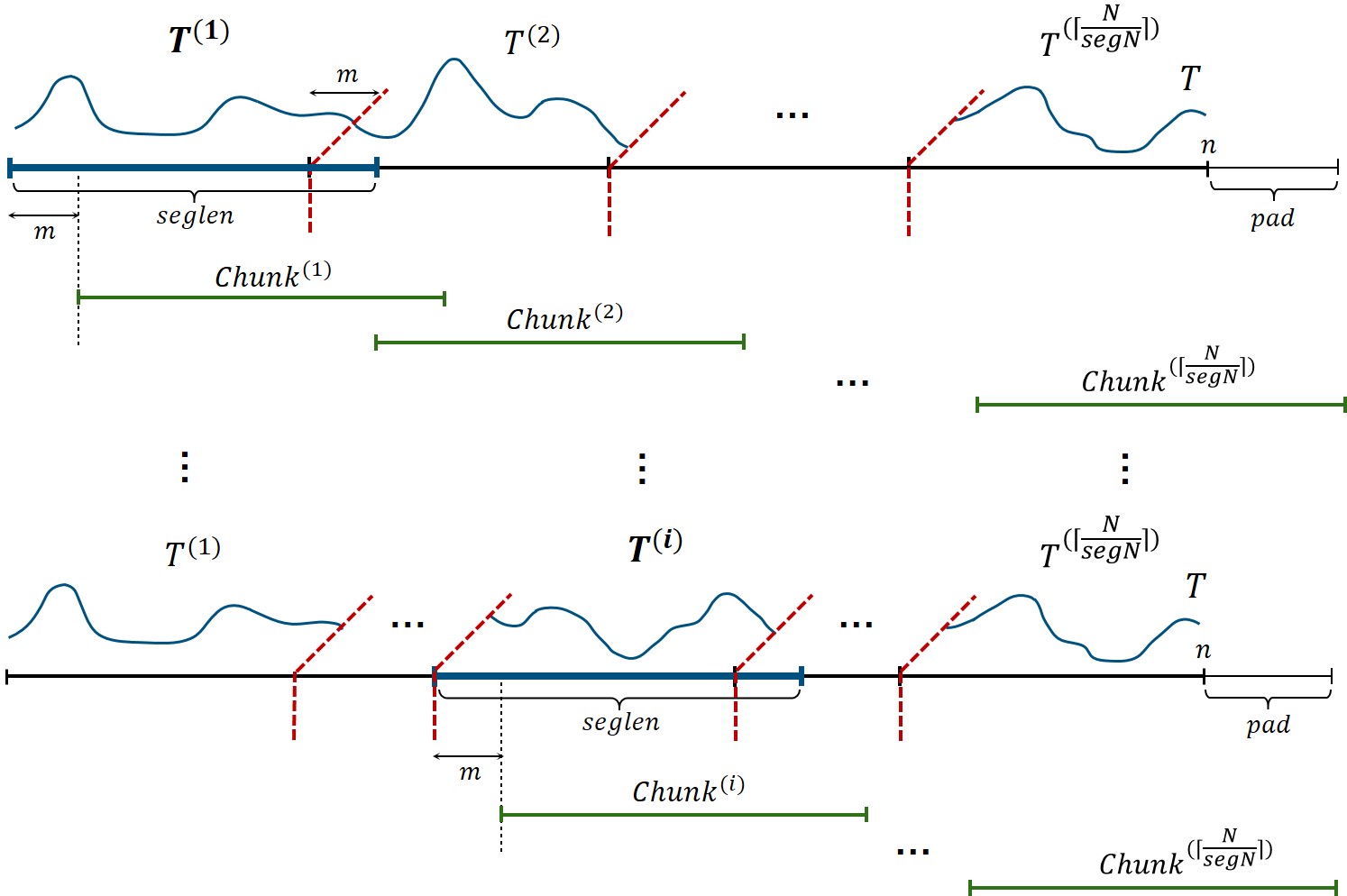}
	\caption{Data segmentation}
	\label{fig:DataSegmentation}
\end{figure}

To implement the candidate selection and discord refinement phases, we exploit the data parallelism concept and segment data as depicted in Fig.~\ref{fig:DataSegmentation}. The time series is divided into equal-length segments, where each segment is processed separately by a block of GPU threads. Performing the phase, the thread block scans the subsequences in chunk-wise manner. The number of elements in a chunk is equal to the segment length, and the first chunk begins with the $m$\nobreakdash-th element in the segment. Such a technique avoids redundant checks of the fact that  candidates and subsequences in chunks overlap.

The segment length is the algorithm's parameter to be set as a multiple of the GPU warp size. To balance the load of threads in the block, we require that the number of $m$\nobreakdash-length subsequences in the time series is a multiple of the number of subsequences of the specified length in the segment. If this is not the case, we pad the time series right with dummy positive infinity-valued elements. Let us denote the segment length and the number of $m$\nobreakdash-length subsequences in the segment by $\textit{seglen}$ and $\textit{segN}$, respectively, then $\textit{segN}=\textit{seglen}-m+1$. Let the number of dummy elements in the  rightmost segment is denoted by $\textit{pad}$, then it is defined as follows:
\begin{equation}
	pad = 
	\begin{cases}
		m-1, &\text{$N$ \text{mod} $\textit{segN}$ = 0}\\
		\lceil \dfrac{N}{\textit{segN}} \rceil \cdotp \textit{segN} + 2(m-1) - n, &\text{otherwise}
	\end{cases}.
\end{equation}

\subsection{Parallelization of Candidate Selection}
\label{subsec:Approach-ParDRAG-Selection}

\begin{algorithm}[!ht]
	\caption{\textsc{\ParDRAG{}select} (\textsc{in} $T$, $m$, $r$; \textsc{out} $\mathcal{C}$)}
	\begin{algorithmic}[1] 
		\State{$\textit{Cand} \leftarrow \overline{\texttt{TRUE}}$; $\textit{Neighbor} \leftarrow \overline{\texttt{TRUE}}$}
		\ForAll{$T^{(i)} \in T$}\Comment{\texttt{PARALLEL (block)}}			
			\ForAll{$\textit{Chunk}^{(j)} \in T^{(i)}$ \textbf{where} $i \leq j$}\Comment{\texttt{PARALLEL (thread)}}
				\If{$i=j$}
					\State{$\textit{QTrow} \leftarrow \textsc{CalcDotProducts}(T_{1,\,m}^{(i)}, \textit{Chunk}^{(j)})$}
					\State{\textbf{continue}}	
				\EndIf	
				\State{$\textit{QTrow} \leftarrow \textsc{UpdateDotProducts}(\textit{QTrow}, T_{1,\,m}^{(i)}, \textit{Chunk}^{(j)})$}
				\State{$\textit{QTcol} \leftarrow \textsc{CalcDotProducts}(\textit{Chunk}_{1,\,m}^{(j)}, T^{(i)})$}
				\State{$\textit{dist} \leftarrow \textsc{CalcDist}(\textit{Chunk}_{1,\,m}^{(j)}, T^{(i)}, \textit{QTcol}, \bar{\mu}, \bar{\sigma})$}	
				\If{$\textit{dist} < r$}
					\State{$\textit{Cand}(i\cdotp \textit{segN}+tid) \leftarrow \texttt{FALSE}; \textit{Neighbor}(j\cdotp \textit{segN}+1) \leftarrow \texttt{FALSE}$}
				\Else
					\State{$\textit{nnDist}(j \cdot \textit{segN}+1) \leftarrow  \min \bigl(\textit{dist},\textit{nnDist}(j \cdot \textit{segN}+1) \bigr)$}	
				\EndIf
				\If{\textbf{not} $\bigvee_{k=i\cdotp \textit{segN}}^{(i+1)\cdotp \textit{segN}} \textit{Cand}(k)$} 
					\State{\textbf{break}}
				\EndIf
				\ForAll{$\textit{Chunk}_{k,\,m}^{(j)} \in S^m_{\textit{Chunk}^{(j)}}\smallsetminus \textit{Chunk}_{1,\,m}^{(j)}$}\Comment{\texttt{PARALLEL (thread)}}
					\State{$\textit{QTcol} \leftarrow \textsc{UpdateDotProducts}(\textit{QTcol}, \textit{QTrow}, \textit{Chunk}_{k,\,m}^{(j)}, T^{(i)})$}
					\State{$\textit{dist} \leftarrow \textsc{CalcDist}(\textit{Chunk}_{k,\,m}^{(j)}, T^{(i)}, \textit{QTcol}, \bar{\mu}, \bar{\sigma})$}
					\If{$\textit{dist} < r$}
						\State{$\textit{Cand}(i\cdotp \textit{segN}+tid) \leftarrow \texttt{FALSE}; \textit{Neighbor}(j\cdotp \textit{segN}+k) \leftarrow \texttt{FALSE}$}
					\Else
						\State{$\textit{nnDist}(j \cdot \textit{segN}+1) \leftarrow  \min \bigl(\textit{dist},\textit{nnDist}(j \cdot \textit{segN}+1) \bigr)$}	
					\EndIf
				\EndFor
				\If{\textbf{not} $\bigvee_{k=i\cdotp \textit{segN}}^{(i+1)\cdotp \textit{segN}} \textit{Cand}(k)$}
					\State{\textbf{break}}
				\EndIf		
			\EndFor
		\EndFor	
		\State{$\mathcal{C} \leftarrow \bigl\{\{T_{i,\,m} \in S^m_T;\, \textit{nnDist}(i)\} \mid 1 \leq i \leq n-m+1, \textit{Cand}(i) = \texttt{TRUE}\bigr\} $}
		\State\Return{$\mathcal{C}$}
	\end{algorithmic}
	\label{alg:ParSelect}
\end{algorithm}

Alg.~\ref{alg:ParSelect} depicts a pseudo code of our parallelization of the candidate selection phase. The respective CUDA kernel forms a grid consisting of $\lceil \tfrac{N}{\textit{segN}} \rceil$ blocks of $\textit{segN}$ threads in each block. 
A thread block considers the segment subsequences as local candidates to discords and performs chunk-wise processing of the subsequences that are located to the \emph{right} of the segment and do not overlap with the candidates. The processing of the subsequences is as follows. If the distance from the candidate to the subsequence is less than the parameter~$r$, then the candidate and the subsequence are excluded from further processing as obviously not discords (the corresponding flags in the bitmaps are set to \texttt{FALSE}). If all the local candidates are discarded, the block terminates all its threads ahead of schedule.

\begin{figure}[ht!]
	\centering
	\includegraphics[width=\linewidth]{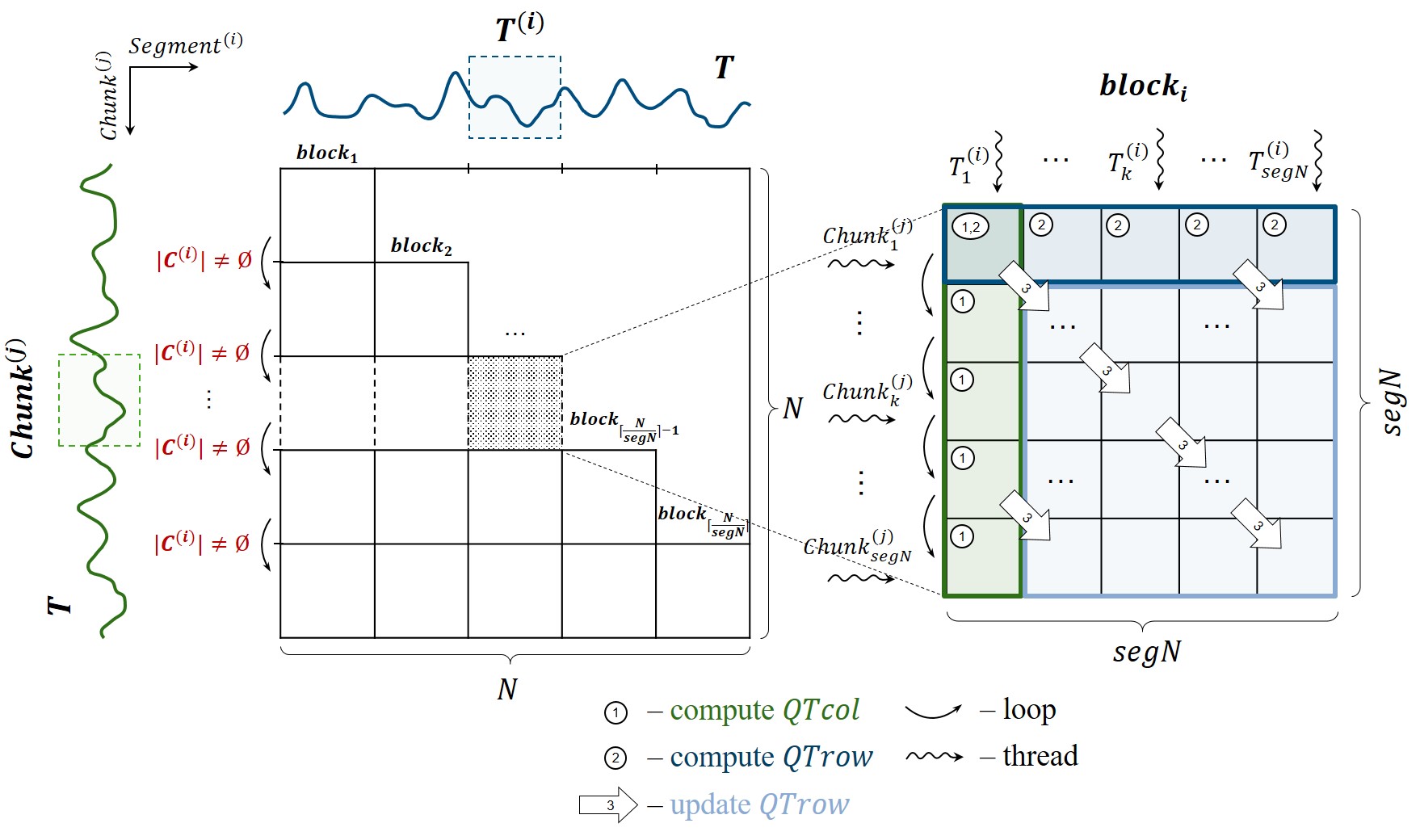}
	\caption{Computational kernel for the candidate selection phase}
	\label{fig:ParSelect}
\end{figure}

In Fig.~\ref{fig:ParSelect}, we show in detail how the block's threads work.
The thread block loads its segment into the shared memory once before starting calculations, and at each scanning step also loads there the current chunk located to the right of the segment. This technique allows for increasing the algorithm performance through reduction the number of reads the time series elements from the global memory. Next, the block threads calculate scalar products, storing the results in shared memory: firstly, products between the first subsequence of the segment and all the subsequences of the current chunk, and then these ones between the first subsequence of the current chunk and all the subsequences of the segment (the vectors $\textit{QTrow}, \textit{QTcol} \in \mathds{R}^{\textit{segN}}$ in lines~4--7 and~8 in Alg.~\ref{alg:ParSelect}, respectively).

Further, based on the obtained vector $\textit{QTcol}$ and the pre-calculated vectors $\bar{\mu}, \bar{\sigma}$, we calculate the distances between the first subsequence of the chunk and all the subsequences of the segment through the Equation~\ref{eq:EDnorm} (see line~9 in Alg.~\ref{alg:ParSelect}). Employing the calculated distances, we discard unpromising candidates located in the segment and the current chunk (see lines~10--11 in Alg.~\ref{alg:ParSelect}). If all the candidates in the segment are discarded, then the block stops (see lines~14--15 in Alg.~\ref{alg:ParSelect}).

After that, the block threads perform similar actions over the remaining subsequences of the current chunk, however, calculating scalar products more efficiently (see lines~16--24 in Alg.~\ref{alg:ParSelect}). We calculate the scalar products between the current subsequence of the chunk and all the subsequences of the segment (i.e., the vector $\textit{QTcol}$) based on the previously calculated vector $\textit{QTrow}$ and the vector $\textit{QTcol}$ obtained at the previous iteration (see line~16 in Alg.~\ref{alg:ParSelect}). 
To calculate the scalar product of the $k$\nobreakdash-th $(1<k \leq \textit{segN})$ subsequence in the $\textit{Chunk}^{(j)}$ and a subsequence in the segment $T^{(i)}$, we employ the following formula:
\begin{equation}
	\label{eq:QTcol}
	QTcol(tid) = 
	\begin{cases}
		\textit{QTcol}(\textit{tid}-1) + T_{\textit{tid},\,m}^{(i)} \cdotp \textit{Chunk}_{k,\,m}^{(j)}(m) -\\
			 \quad -T_{\textit{tid}-1,\,m}^{(i)}(1) \cdotp \textit{Chunk}_{k-1,\,m}^{(j)}(1),  & \text{$1 < \textit{tid} \leq \textit{segN}$}
	\\
		\textit{QTrow}(k), & \text{$\textit{tid} = 1$}
	\end{cases},
\end{equation}
where $tid$~denotes the number of a thread in the block. Since in Equation~\ref{eq:QTcol}, the first term is obtained from the previous iteration, we achieve the $O(1)$ complexity of the scalar product calculation instead of $O(m)$ as in the straightforward case.

\subsection{Parallelization of Discord Refinement}
\label{subsec:Approach-ParDRAG-Refinement}

\begin{algorithm}[!ht]
	\caption{\textsc{\ParDRAG{}refine} (\textsc{in} $T$, $m$, $r$; \textsc{out} $\mathcal{D}$)}
\begin{algorithmic}[1]
	\ForAll{$T_{i,\,m} \in S^m_T$}\Comment{\texttt{PARALLEL (thread)}}
		\State{$\textit{Cand}(i) \leftarrow \textit{Cand}(i) \wedge \textit{Neighbor}(i)$}
	\EndFor
	\ForAll{$T^{(i)} \in T$ \textbf{where} $\bigwedge_{k=i \cdotp \textit{segN}}^{(i+1) \cdotp \textit{segN}} \textit{Cand}(k) = \texttt{TRUE}$}\Comment{\texttt{PARALLEL (block)}}
		\ForAll{$Chunk^{(j)} \in T^{(i)}$ \textbf{where} $i \geq j$}\Comment{\texttt{PARALLEL (thread)}}
			\If{$i=j$}
				\State{$\textit{QTrow} \leftarrow \textsc{CalcDotProducts}(T_{1,\,m}^{(i)}, \textit{Chunk}^{(j)})$}
				\State{\textbf{continue}}	
			\EndIf	
			\State{$\textit{QTrow} \leftarrow \textsc{UpdateDotProducts}(\textit{QTrow}, T_{1,\,m}^{(i)}, \textit{Chunk}^{(j)})$}
			\State{$\textit{QTcol} \leftarrow \textsc{CalcDotProducts}(\textit{Chunk}_{1,\,m}^{(j)}, T^{(i)})$}
			\State{$dist \leftarrow \textsc{CalcDist}(\textit{Chunk}_{1,\,m}^{(j)}, T^{(i)}, \textit{QTcol}, \bar{\mu}, \bar{\sigma})$}	
			\If{$\textit{dist} < r$}
				\State{$\textit{Cand}(i \cdotp \textit{segN}+\textit{tid}) \leftarrow \texttt{FALSE}$}	
			\Else
				\State{$\textit{nnDist}(j \cdot \textit{segN}+\textit{tid}) \leftarrow \min\bigl(\textit{dist},\textit{nnDist}(j \cdot \textit{segN}+\textit{tid})\bigr)$}		
			\EndIf
			\If{\textbf{not} $\bigvee_{k=i \cdotp \textit{segN}}^{(i+1) \cdotp \textit{segN}} \textit{Cand}(k)$}
				\State{\textbf{break}}
			\EndIf
			\ForAll{$\textit{Chunk}_{k,m}^{(j)} \in S^m_{\textit{Chunk}^{(j)}} \smallsetminus \textit{Chunk}_{1,\,m}^{(j)}$}\Comment{\texttt{PARALLEL (thread)}}
				\State{$\textit{QTcol} \leftarrow \textsc{UpdateDotProducts}(\textit{QTcol}, QTrow, Chunk_{k,\,m}^{(j)}, T^{(i)})$}
				\State{$\textit{dist} \leftarrow \textsc{CalcDist}(\textit{Chunk}_{k,\,m}^{(j)}, T^{(i)}, \textit{QTcol}, \bar{\mu}, \bar{\sigma})$}
				\If{$dist < r$}
					\State{$\textit{Cand}(i \cdotp \textit{segN}+\textit{tid}) \leftarrow \texttt{FALSE}$}
				\Else
					\State{$\textit{nnDist}(j \cdot segN+tid) \leftarrow \min\bigl(\textit{dist},\textit{nnDist}(j \cdot \textit{segN}+\textit{tid})\bigr)$}			
				\EndIf
			\EndFor
			\If{\textbf{not} $\bigvee_{k=i \cdotp \textit{segN}}^{(i+1) \cdotp \textit{segN}} \textit{Cand}(k)$}
				\State{\textbf{break}}
			\EndIf		
		\EndFor
	\EndFor
	\State{$\mathcal{D} \leftarrow \bigl\{\{T_{i,\,m} \in S^m_T;\, \textit{nnDist}(i)\} \mid 1 \leq i \leq n-m+1, \textit{Cand}(i) = \texttt{TRUE}\bigr\} $}
	\State\Return{$\mathcal{D}$}
\end{algorithmic}
\label{alg:ParRefine}
\end{algorithm}

The discord refinement phase (see Alg.~\ref{alg:ParRefine}) parallelization is implemented through two CUDA kernels called one after the other. The first one trivially refines discords obtained in the previous phase through the element-wise conjunction of the $Cand$ and $Neighbor$ bitmap vectors, writing the result to the former. This operation allows for pruning the nearest neighbors of the subsequences discarded at the selection phase.

The second kernel performs non-trivial refinement and is parallelized similar to the selection phase involving only those segments of the time series whose set of local candidates is not empty (see line~3 in Alg.~\ref{alg:ParRefine}). The algorithm scans and processes the subsequences that do not overlap with the candidates and are located to the \emph{left} of the segment (see line~4 in Alg.~\ref{alg:ParRefine}). If the distance between the candidate and the subsequence is less than the parameter $r$, then the candidate is discarded as obvious false positive.


\section{Experimental Evaluation}
\label{sec:Experiments}

To evaluate the proposed algorithm, we carried out the experiments and study the performance of \ParMERLIN{} over various real-world and synthetic time series in comparison with analogs, and investigated the algorithm's scalability.
We designed the experiments to be easily reproducible with our  repository~\cite{KraevaZ22} that contains the algorithm's source code and all the datasets used in this work.
Below, Section~\ref{subsec:Experiments-Setup} describes hardware and time series employed in the experiments, and Section~\ref{subsec:Experiments-Results} presents experimental results and discussion.

\subsection{The Experimental Setup}
\label{subsec:Experiments-Setup}

\begin{table}[ht!]
	\caption{Time series employed in the experiments}
	\label{tab:Datasets}
	\centering
	\begin{tabular}{|l|c|c|l|}
		\hline
		\multicolumn{1}{|c|}{Time series} & \begin{tabular}[c]{@{}c@{}}Length \\ $(n)$\end{tabular}  & \begin{tabular}[c]{@{}c@{}}Discord length \\ $(\textit{minL}=\textit{maxL})$\end{tabular} & \multicolumn{1}{c|}{Domain} \\ 
		\hline
		Space shuttle & 50~000  & 150 & \begin{tabular}[l]{@{}l@{}}Measurements of a sensor on the NASA \\ spacecraft\end{tabular} \\ 
		\hline
		ECG & 45~000 & 200 & \multirow{3}{*}{Electrocardiogram of an adult patient} \\ 
		\cline{1-3}
		ECG\nobreakdash-2 & 21~600  & 400 & \\ 
		\cline{1-3}
		Koski\nobreakdash-ECG & 100~000 & 458 & \\ 
		\hline
		Respiration & 24~125 & 250 & Human breathing by chest expansion \\
		\hline
		Power demand & 33~220  & 750 & Annual energy consumption of an office \\ 
		\hline
		RandomWalk1M & $10^7$ & 512 & \multirow{2}{*}{Synthetic time series} \\ 
		\cline{1-3}
		RandomWalk2M & $2 \cdot 10^7$ & 512 & \\ 
		\hline
	\end{tabular}
\end{table}

In our study, we employed the time series listed in Tab.~\ref{tab:Datasets} that are also used in the experimental evaluation of HOTSAX~\cite{DBLP:conf/icdm/KeoghLF05}, KBF\_GPU~\cite{ThuyAC21}, and Zhu \textit{et al.}'s~\cite{DBLP:journals/tpds/ZhuJGD21} algorithm. The Space shuttle data~\cite{NASAShuttleValveData05} are solenoid current measurements on a Marotta MPV\nobreakdash-41 series valve as the valve is cycled on and off under various test conditions in a laboratory where the valves are used to control fuel flow on the NASA spacecraft. The ECG and ECG2~\cite{PhysioBank00}, and Koski\nobreakdash-ECG~\cite{DBLP:journals/prl/Koski96} time series are electrocardiograms of adult patients. The Respiration time series~\cite{DBLP:conf/icdm/KeoghLF05} shows a patient’s breathing (measured by thorax extension), as s/he wakes up. The Power demand time series reflects the energy consumption of the research center in Netherlands for 1997~\cite{DBLP:conf/infovis/WijkS99}. The RandomWalk1M and RandomWalk2M are our generated time series through the random walk model~\cite{Pearson1905}.

\begin{table}[!ht]
	\caption{Hardware platform of the experiments}
	\label{tab:Hardware}
	\centering
	\begin{tabular}{|l|c|c|}
		\hline           
		\multicolumn{1}{|c|}{Specifications} & \multicolumn{1}{c|}{GPU\nobreakdash-SUSU} & GPU\nobreakdash-MSU \\ 
		\hline
		Brand and product line & \multicolumn{2}{c|}{NVIDIA Tesla} \\ 
		\hline
		Model & \multicolumn{1}{c|}{V100} & P100 \\ 
		\hline
		\# cores & \multicolumn{1}{c|}{5~120} & 3~584 \\ 
		\hline
		Core frequency, GHz & \multicolumn{1}{c|}{1.3} &  1.19\\ 
		\hline
		Memory, Gb & \multicolumn{1}{c|}{32} &  16\\ 
		\hline
		\begin{tabular}[l]{@{}l@{}}Peak performance\\ 
			(double precision), TFLOPS
		\end{tabular} & \multicolumn{1}{c|}{7} & 4 \\ 
		\hline
	\end{tabular}
\end{table}

Tab.~\ref{tab:Hardware} summarizes hardware platform of our experiments, where GPU\nobreakdash-SUSU and GPU\nobreakdash-MSU denote graphics processors installed in the HPC centers of the South Ural State University~\cite{DolganinaIBR22} and Moscow State University~\cite{DBLP:journals/superfri/VoevodinANSSSSV19}, respectively. 

\subsection{Results and Discussion}
\label{subsec:Experiments-Results}

\subsubsection{Comparison with Analogs}
\label{subsubsec:Experiments-Comparison}

In the experiments, we compared \ParMERLIN{} with two algorithms, namely, KBF\_GPU~\cite{ThuyAC21} and Zhu \textit{et~al.}'s~\cite{DBLP:journals/tpds/ZhuJGD21} since our thorough review of related work (see Section~\ref{sec:RelatedWork}) did not reveal other GPU-oriented parallel competitors. Since the authors of the above rivals do not provide their source codes, for a fair comparison, in the experiments, we utilize time series and hardware identical to those employed in~\cite{ThuyAC21} and~\cite{DBLP:journals/tpds/ZhuJGD21}, respectively, and compare our results with ones reported in the original papers by the authors. The time series Koski\nobreakdash-ECG (see Tab.~\ref{tab:Datasets}) was employed to compare \ParMERLIN{} with KBF\_GPU on the GPU\nobreakdash-SUSU system (see Tab.~\ref{tab:Hardware}), and the rest time series were involved in comparison with Zhu \textit{et~al.}'s algorithm on GPU\nobreakdash-MSU. 
We omit the MERLIN performance results since, as expected, the original serial algorithm is significantly inferior to its parallel descendant, although in the experiments, we confirmed that \ParMERLIN{} produces exactly the same results as MERLIN. For each experiment, we ran \ParMERLIN{} 10~times and took the average value as the final running time.

Since the rival algorithms discover only the top\nobreakdash-1 discord whereas our algorithm finds all the discords of each length in a specified length range, to provide a fair comparison, in the experiments, we employ two following settings for \ParMERLIN{}. First, we set $\textit{minL}=\textit{maxL}$ for the range above. Second, we measure both running time of \ParMERLIN{} and the number of discords found to further show also the average time spent by our algorithm to discover one discord.


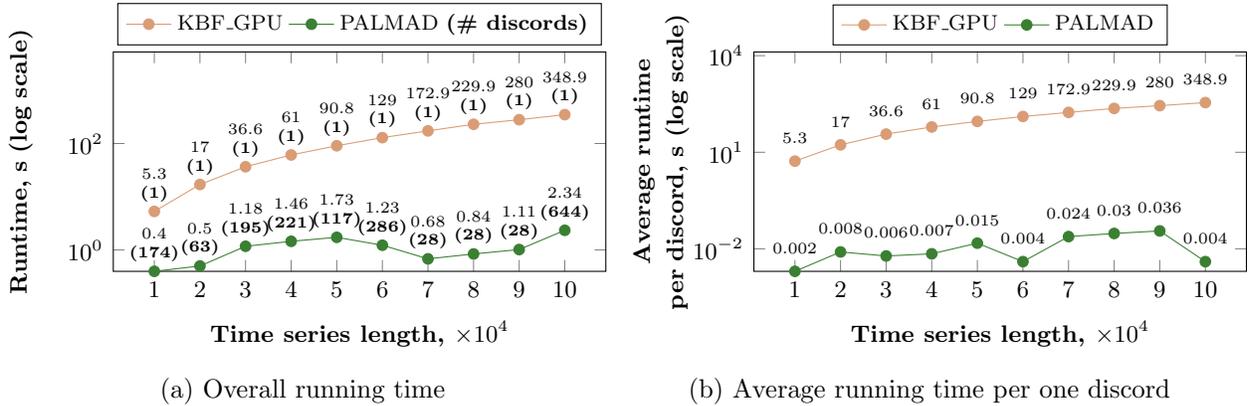
\begin{figure}[ht!]
	\centering
	\begin{subfigure}[h]{0.49\textwidth}	
		\centering	
		\pgfplotsset{every axis legend/.append style={
				at={(0.5,1.03)},
				anchor=south,
				legend cell align=left,
				font=\scriptsize}}
		\begin{tikzpicture}
			\begin{axis} [
			ymode=log,
			font=\footnotesize,
			log origin=infty,
			nodes near coords style={above},
			every node near coord/.append style={
				align=center,
				yshift=0pt,
				font=\tiny,
				color=black,
				/pgf/number format/fixed,
			},
			legend columns=-1,
			height=4.5cm,
			width=\textwidth,
			xlabel={\textbf{Time series length, \(\times 10^4\)}},
			ylabel style = {text width=4.5cm, align=center},
			ylabel = {\textbf{Runtime, s (log scale)}},
			enlarge y limits = {value = .4, upper},
			xtick={1, 2, 3, 4, 5, 6, 7, 8, 9, 10},
			x tick label style={align=center},
			]
			
			\addplot[mark=*, Tan, mark options={fill=Tan}] coordinates {(1, 5.3) (2, 17) (3, 36.6) (4, 61) (5, 90.8) (6, 129) (7, 172.9) (8, 229.9) (9, 280) (10, 348.9)};
			
			\addplot[mark=*, OliveGreen, mark options={fill=OliveGreen}] coordinates {(1,0.397) (2,0.5) (3,1.173) (4,1.455) (5,1.729) (6,1.228) (7,0.681) (8,0.842) (9,1.018) (10,2.342)};
			
			\addplot[only marks, nodes near coords, point meta=explicit symbolic] coordinates { (1, 5.3) [{5.3 \\ \textbf{(1)}}] (2, 17) [17 \\ \textbf{(1)}] (3, 36.6) [36.6 \\ \textbf{(1)}] (4, 61) [61 \\ \textbf{(1)}] (5, 90.8) [90.8 \\ \textbf{(1)}] (6, 129) [129 \\ \textbf{(1)}] (7, 172.9) [172.9 \\ \textbf{(1)}] (8, 229.9) [229.9 \\ \textbf{(1)}] (9, 280) [280 \\ \textbf{(1)}] (10, 348.9) [348.9 \\ \textbf{(1)}]}; 
			
			\addplot[only marks, nodes near coords, point meta=explicit symbolic, xshift=0pt] coordinates {(1, 0.397) [0.4 \\ \textbf{(174)}] (2, 0.5) [0.5 \\ \textbf{(63)}] (3, 1.173) [1.18 \\ \textbf{(195)}] (4, 1.455) [1.46 \\ \textbf{(221)}] (5, 1.729) [1.73 \\ \textbf{(117)}] (6, 1.228) [1.23 \\ \textbf{(286)}] (7, 0.681) [0.68 \\ \textbf{(28)}] (8, 0.842) [0.84 \\ \textbf{(28)}] (9, 1.018) [1.11 \\ \textbf{(28)}] (10, 2.342) [2.34 \\ \textbf{(644)}]};
			
			\legend {KBF\_GPU, PALMAD \textbf{(\# discords)}};
			
			\end{axis}
		\end{tikzpicture}
		\caption{Overall running time}
		\label{subfig:Performance-KBF-Overall} 
	\end{subfigure}
	\hfill
	\begin{subfigure}[h]{0.49\textwidth}	
		\centering			
		\begin{tikzpicture}
			\pgfplotsset{every axis legend/.append style={
				at={(0.5,1.03)},
				anchor=south,
				legend cell align=left,
				font=\scriptsize}}
			\begin{axis} [
			ymode=log,
			font=\footnotesize,
			log origin=infty,
			nodes near coords style={above},
			every node near coord/.append style={
				align=center,
				yshift=3pt,
				font=\tiny,
				color=black
			},
			legend columns=-1,
			height=4.5cm,
			width=\textwidth,
			xlabel={\textbf{Time series length, \(\times 10^4\)}},
			ylabel style = {text width=4.5cm, align=center},
			ylabel={\textbf{Average runtime\\per discord, s  (log scale)}},
			enlarge y limits = {value = .30, upper},
			xtick={1, 2, 3, 4, 5, 6, 7, 8, 9, 10},
			x tick label style={align=center},
			]

			\addplot[mark=*, Tan, mark options={fill=Tan}] coordinates {(1, 5.3) (2, 17) (3, 36.6) (4, 61) (5, 90.8) (6, 129) (7, 172.9) (8, 229.9) (9, 280) (10, 348.9)};
			
			\addplot[mark=*, OliveGreen, mark options={fill=OliveGreen}] coordinates {(1, 0.002) (2, 0.008) (3, 0.006) (4, 0.007) (5, 0.015) (6, 0.004) (7, 0.024) (8, 0.03) (9, 0.036) (10, 0.004)};
			
			\addplot[only marks, nodes near coords, point meta=explicit symbolic] coordinates { (1, 5.3) [5.3] (2, 17) [17] (3, 36.6) [36.6] (4, 61) [61] (5, 90.8) [90.8] (6, 129) [129] (7, 172.9) [172.9] (8, 229.9) [229.9] (9, 280) [280] (10, 348.9) [348.9]}; 
			
			\addplot[only marks, nodes near coords, point meta=explicit symbolic, xshift=0pt] coordinates {(1, 0.002) [0.002] (2, 0.008) [0.008] (3, 0.006) [0.006] (4, 0.007) [0.007] (5, 0.015) [0.015] (6, 0.004) [0.004] (7, 0.024) [0.024] (8, 0.03) [0.03] (9, 0.036) [0.036] (10, 0.004) [0.004]};
			
			\legend {KBF\_GPU, PALMAD};
			
			\end{axis}
		\end{tikzpicture}
		\caption{Average running time per one discord}
		\label{subfig:Performance-KBF-Average} 
	\end{subfigure}
	\vspace{5pt}
	\caption{Performance of \ParMERLIN{} in comparison with KBF\_GPU}
	\label{fig:Performance-KBF}
\end{figure}

In Fig.~\ref{fig:Performance-KBF}, we compare the performance of \ParMERLIN{} and KBF\_GPU. It can be seen that our algorithm significantly outruns the rival in terms of both the overall running time and the average running time to discover one discord. Obviously, the reason is that KBF\_GPU implements brute-force approach whereas \ParMERLIN{} avoids redundant calculations and exploits advanced data structures.


\begin{figure}[ht!]
	\centering
	\begin{subfigure}{\textwidth}	
		\centering	
		\begin{tikzpicture}
			\begin{axis} [ybar=2pt,
			bar width = 11pt,
			xtick = data,
			ymin = 0,
			ymode=log,
			font=\footnotesize,
			log origin=infty,
			nodes near coords style={above},
			every node near coord/.append style={
				align=center,
				yshift=-3pt,
				font=\tiny
			},
			legend style={
				legend pos=north west,
				legend cell align=left,
				font=\scriptsize
			},
			legend image code/.code={
				\draw [#1] (0cm,-0.1cm) rectangle (0.15cm,0.15cm); },
			height=6cm,
			width=0.8\textwidth,
			ylabel style = {text width=6cm, align=center},
			ylabel = {\textbf{Runtime, s \\ (log scale)}},
			enlarge y limits = {value = .15, upper},
			enlarge x limits=0.1,
			x tick label style={rotate=0, align=center},
			symbolic x coords={Space shuttle, ECG, ECG2, Power demand, Respiration, RandomWalk1M, RandomWalk2M},
			xticklabels = {Space\\shuttle, ECG, ECG2, Power\\demand, Respiration, Random\\Walk1M, Random\\Walk2M}
			]
			
			\addplot coordinates {(Space shuttle, 7) (ECG, 76) (ECG2, 26) (Power demand, 63) (Respiration, 26) (RandomWalk1M, 5431) (RandomWalk2M, 15190)};
			
			\addplot[fill=green4,draw=black!90] coordinates {(Space shuttle, 71) (ECG, 1056) (ECG2, 495) (Power demand, 353) (Respiration, 584) (RandomWalk1M, 394838) (RandomWalk2M, 1568880)};
			
			\addplot[only marks, nodes near coords, point meta=explicit symbolic, xshift=0pt] coordinates { (Space shuttle, 7) [7 \\ \textbf{(1)}] (ECG, 76) [76 \\ \textbf{(1)}] (ECG2, 26) [26 \\ \textbf{(1)}] (Power demand, 63) [63 \\ \textbf{(1)}] (Respiration, 26) [26 \\ \textbf{(1)}] (RandomWalk1M, 5431) [5~431 \\ \textbf{(1)}] (RandomWalk2M, 15190) [15~190 \\ \textbf{(1)}]}; 
			
			\addplot[only marks, nodes near coords, point meta=explicit symbolic, xshift=0pt] coordinates { (Space shuttle, 71) [{71 \\ \textbf{(963)}}] (ECG, 1056) [1 056 \\ \textbf{(111)}] (ECG2, 495) [495 \\ \textbf{(1~618)}] (Power demand, 353) [353 \\ \textbf{(1~290)}] (Respiration, 584) [584 \\ \textbf{(7~915)}] (RandomWalk1M, 394 838) [394~838 \\ \textbf{(1~047~809)}] (RandomWalk2M, 1568880) [1 568~880 \\ \textbf{(2~096~385)}]}; 
			
			\legend {Zhu \emph{et al.}, PALMAD \textbf{(\# discords)}};
			
			\end{axis}
		\end{tikzpicture}
		\caption{Overall running time}
		\label{subfig:Performance-Zhu-Overall} 
	\end{subfigure}
	\hfill
	\begin{subfigure}{\textwidth}	
		\centering			
		\begin{tikzpicture}
			\begin{axis} [ybar=1pt,
			bar width = 6pt,
			xtick = data,
			ymode=log,
			font=\footnotesize,
			log origin=infty,
			nodes near coords align={vertical},  
			nodes near coords style={above},
			every node near coord/.append style={
				yshift=-2pt,
				font=\tiny,
				rotate=90,
				anchor=west
			},
			legend style={
				legend pos=north west,
				legend cell align=left,
				font=\scriptsize,
				legend columns=2, 
				column sep=0.2cm
			},
			legend image code/.code={
				\draw [#1] (0cm,-0.1cm) rectangle (0.15cm,0.15cm); },
			height=7cm,
			width=\textwidth,
			ylabel style={text width=7cm, align=center},
			ylabel={\textbf{Average runtime per discord, ms \\ (log scale)}},
			enlarge y limits = {value = .4, upper},
			x tick label style={rotate=0, align=center},
			symbolic x coords={Space shuttle, ECG, ECG2, Power demand, Respiration, RandomWalk1M, RandomWalk2M},
			xticklabels = {Space\\shuttle, ECG, ECG2, Power\\demand, Respiration, Random\\Walk1M, Random\\Walk2M}
			]
			
			\addplot coordinates {(Space shuttle, 6.5) (ECG, 75.56) (ECG2, 25.61) (Power demand, 63.37) (Respiration, 26.45) (RandomWalk1M, 5431.0) (RandomWalk2M, 15190.0)};
			
			\addplot[fill=green1,draw=black!90] coordinates {(Space shuttle, 71.24) (ECG, 1056.04) (ECG2, 495.45) (Power demand, 353.12) (Respiration, 584.4) (RandomWalk1M, 394837.75) (RandomWalk2M, 1568880.38)};
			
			\addplot[fill=green2,draw=black!90] coordinates {(Space shuttle, 0.74) (ECG, 95.14) (ECG2, 3.06) (Power demand, 2.74) (Respiration, 0.74) (RandomWalk1M, 3.77) (RandomWalk2M, 7.48)};
			
			\addplot[fill=green3,draw=black!90] coordinates {(Space shuttle, 0.3) (ECG, 38.06) (ECG2, 1.23) (Power demand, 1.09) (Respiration, 0.29) (RandomWalk1M, 1.51) (RandomWalk2M, 2.99)}; 
			
			\addplot[fill=green4,draw=black!90] coordinates {(Space shuttle, 0.15) (ECG, 19.03) (ECG2, 0.61) (Power demand, 0.55) (Respiration, 0.15) (RandomWalk1M, 0.75) (RandomWalk2M, 1.5)}; 
			
			\addplot[fill=green5,draw=black!90] coordinates {(Space shuttle, 0.1) (ECG, 12.69) (ECG2, 0.41) (Power demand, 0.36) (Respiration, 0.1) (RandomWalk1M, 0.5) (RandomWalk2M, 1.0)}; 
			
			\addplot[fill=green6,draw=black!90] coordinates {(Space shuttle, 0.07) (ECG, 9.51) (ECG2, 0.31) (Power demand, 0.27) (Respiration, 0.07) (RandomWalk1M, 0.38) (RandomWalk2M, 0.75)};
			
			\addplot[only marks, nodes near coords, point meta=explicit symbolic, xshift=0pt] coordinates { (Space shuttle, 6.5) [6.5] (ECG, 75.56) [75.56] (ECG2, 25.61) [25.61] (Power demand, 63.37) [63.37] (Respiration, 26.45) [26.45] (RandomWalk1M, 5431.0) [5~431.0] (RandomWalk2M, 15190.0) [15~190.0]}; 
			
			\addplot[only marks, nodes near coords, point meta=explicit symbolic, xshift=0pt] coordinates {(Space shuttle, 71.24) [71.24] (ECG, 1056.04) [1~056.04] (ECG2, 495.45) [495.45] (Power demand, 353.12) [353.12] (Respiration, 584.4) [584.4] (RandomWalk1M, 394837.75) [394~837.75] (RandomWalk2M, 1568880.38) [1~568~880.38]};

			\addplot[only marks, nodes near coords, point meta=explicit symbolic, xshift=0pt] coordinates {(Space shuttle, 0.74) [0.74] (ECG, 95.14) [95.14] (ECG2, 3.06) [3.06] (Power demand, 2.74) [2.74] (Respiration, 0.74) [0.74] (RandomWalk1M, 3.77) [3.77] (RandomWalk2M, 7.48) [7.48]};
			
			\addplot[only marks, nodes near coords, point meta=explicit symbolic, xshift=0pt] coordinates {(Space shuttle, 0.3) [0.3] (ECG, 38.06) [38.06] (ECG2, 1.23) [1.23] (Power demand, 1.09) [1.09] (Respiration, 0.29) [0.29] (RandomWalk1M, 1.51) [1.51] (RandomWalk2M, 2.99) [2.99]};
			
			\addplot[only marks, nodes near coords, point meta=explicit symbolic, xshift=0pt] coordinates {(Space shuttle, 0.15) [0.15] (ECG, 19.03) [19.03] (ECG2, 0.61) [0.61] (Power demand, 0.55) [0.55] (Respiration, 0.15) [0.15] (RandomWalk1M, 0.75) [0.75] (RandomWalk2M, 1.5) [1.5]};
			
			\addplot[only marks, nodes near coords, point meta=explicit symbolic, xshift=0pt] coordinates {(Space shuttle, 0.1) [0.1] (ECG, 12.69) [12.69] (ECG2, 0.41) [0.41] (Power demand, 0.36) [0.36] (Respiration, 0.1) [0.1] (RandomWalk1M, 0.5) [0.5] (RandomWalk2M, 1.0) [1.0]};
			
			\addplot[only marks, nodes near coords, point meta=explicit symbolic, xshift=0pt] coordinates {(Space shuttle, 0.07) [0.07] (ECG, 9.51) [9.51] (ECG2, 0.31) [0.31] (Power demand, 0.27) [0.27] (Respiration, 0.07) [0.07] (RandomWalk1M, 0.38) [0.38] (RandomWalk2M, 0.75) [0.75]};

			\legend {Zhu \emph{et al.}, {PALMAD, $\textit{topK}$ = 1}, {PALMAD, $\textit{topK}$ = 10\% of $|\mathcal{D}|$}, {PALMAD, $\textit{topK}$ = 25\% of $|\mathcal{D}|$}, {PALMAD, $topK$ = 50\% of $|\mathcal{D}|$}, {PALMAD, $\textit{topK}$ = 75\% of $|\mathcal{D}|$}, {PALMAD, $\textit{topK}$ = $|\mathcal{D}|$}};
			
			\end{axis}
		\end{tikzpicture}
		\caption{Average running time per one discord}
		\label{subfig:Performance-Zhu-Average} 
	\end{subfigure}
	\vspace{5pt}
	\caption{Performance of \ParMERLIN{} in comparison with Zhu \textit{et~al.}'s algorithm}
	\label{fig:Performance-Zhu}
\end{figure}

Figure~\ref{fig:Performance-Zhu} depicts the experimental results on the \ParMERLIN{} performance comparing with Zhu \textit{et~al.}'s algorithm. It can be seen that Zhu \textit{et~al.}'s algorithm significantly outruns \ParMERLIN{}: up to 20~times and up to two orders of magnitude greater over real and synthetic time series, respectively. However, at the same time,  \ParMERLIN{} discovers substantially more discords: at least to two and seven orders of magnitude greater over real and synthetic time series, respectively. Thus, comparing the average running time to discover one discord, it can be seen that \ParMERLIN{} significantly outruns the rival starting at least from the moment when we set $topK$, the number of discords to be discovered, as a quarter of actual number of discords found: at least two times and three orders of magnitude over real and synthetic time series, respectively. 

\subsubsection{Scalability of PALMAD}
\label{subsubsec:Experiments-Scalability}


In addition to comparison of our algorithm with analogs, we also study the scalability of \ParMERLIN{}. First, we investigate the impact of the segment length (the parameter $seglen$, see Section~\ref{subsec:Approach-ParDRAG-Selection}) on the \ParMERLIN{} performance. Second, we asses our algorithm's performance depending on two input parameters that directly affect the amount of calculations, namely the time series length and discord range length. 

\begin{figure}[ht!]
	\centering
	\begin{subfigure}{0.6\textwidth}	
		\centering
		\begin{tikzpicture}
			\begin{axis} [ybar=2pt,
			bar width = 9pt,
			xtick = data,
			font=\footnotesize,
			log origin=infty,
			nodes near coords style={above},
			every node near coord/.append style={
				align=center,
				yshift=-1pt,
				font=\tiny
			},
			legend style={
				legend pos=north west,
				legend cell align=left,
				font=\scriptsize
			},
			legend image code/.code={
				\draw [#1] (0cm,-0.1cm) rectangle (0.15cm,0.15cm); },
			height=6.5cm,
			width=\textwidth,
			ylabel style = {text width=6.5cm, align=center},
			ylabel={\textbf{Average runtime per discord, ms}},
			enlarge y limits = {value = .1, upper},
			enlarge x limits = 0.15,
			x tick label style={rotate=0},
			symbolic x coords={Space shuttle, Respiration, ECG2, Power demand, ECG},
			x tick label style={align=center},
			xticklabels = {Space\\shuttle, Respiration, ECG2, Power\\demand, ECG}
			]
			
			\addplot[fill=green4,draw=black!90] coordinates {(Space shuttle, 45.2) (Respiration, 237.2) (ECG2, 311.9) (Power demand, 396.0) (ECG, 496.6)};
			
			\addplot[fill=green4,draw=black!90, postaction={pattern=north west lines}] coordinates {(Space shuttle, 36.9) (Respiration,196.8) (ECG2,233.0) (Power demand,262.1) (ECG,419.7)};
			
			\addplot[fill=green4,draw=black!90, postaction={pattern=horizontal lines}] coordinates {(Space shuttle, 33.8) (Respiration, 179.9) (ECG2, 197.9) (Power demand, 198.6) (ECG, 387.2)};
			
			\addplot[only marks, nodes near coords, point meta=explicit symbolic] coordinates { (Space shuttle, 45.2) [45.2] (Respiration, 237.2) [237.2] (ECG2, 311.9) [311.9] (Power demand, 396.0) [396.0] (ECG, 496.6) [496.6]}; 
			
			\addplot[only marks, nodes near coords, point meta=explicit symbolic, xshift=2pt] coordinates {(Space shuttle, 36.9) [36.9] (Respiration, 196.8) [196.8] (ECG2, 233.0) [233.0] (Power demand, 262.1) [262.1] (ECG, 419.7) [419.7]};
			
			\addplot[only marks, nodes near coords, point meta=explicit symbolic, xshift=2pt] coordinates {(Space shuttle, 33.8) [33.8] (Respiration, 179.9) [179.9] (ECG2, 197.9) [197.9] (Power demand, 198.6) [198.6] (ECG, 387.2) [387.2] };
			
			\legend{\textit{seglen} = 128, \textit{seglen} = 256, \textit{seglen} = 512};
			
			\end{axis}
		\end{tikzpicture}
	\end{subfigure}
	\hfill
	\begin{subfigure}{0.39\textwidth}	
		\centering			
		\begin{tikzpicture}
			\begin{axis} [ybar=2pt,
			bar width = 9pt,
			xtick = data,
			ymode=log,
			ytick={100000,200000,300000,400000,600000},
			yticklabels={1$\cdot 10^5$,2$\cdot 10^5$,3$\cdot 10^5$,4$\cdot 10^5$,6$\cdot 10^5$},
			font=\footnotesize,
			log origin=infty,
			nodes near coords style={above},
			every node near coord/.append style={
				yshift=-2pt,
				font=\tiny,
				rotate=90,
				anchor=west
			},
			legend style={
				legend pos=north west,
				legend cell align=left,
				font=\scriptsize
			},
			legend image code/.code={
				\draw [#1] (0cm,-0.1cm) rectangle (0.15cm,0.15cm); },
			height=6.5cm,
			width=0.8\textwidth,
			ylabel style = {text width=6.5cm, align=center},
			ylabel={\textbf{Average runtime per discord, ms \\ (log scale)}},
			enlarge y limits = {value = .3, upper},
			enlarge x limits = 0.55,
			x tick label style={rotate=0},
			symbolic x coords={RandomWalk1M, RandomWalk2M},
			x tick label style={align=center},
			xticklabels = {Random\\Walk1M, Random\\Walk2M}
			]
			
			\addplot[fill=green4,draw=black!90] coordinates {(RandomWalk1M, 110255.6) (RandomWalk2M, 430711.7)};
			
			\addplot[fill=green4,draw=black!90, postaction={pattern=north west lines}] coordinates {(RandomWalk1M, 108561.3) (RandomWalk2M, 428368.0)};
			
			\addplot[fill=green4,draw=black!90, postaction={pattern=horizontal lines}] coordinates {(RandomWalk1M, 107667.0) (RandomWalk2M, 425177.3)};
			
			\addplot[only marks, nodes near coords, point meta=explicit symbolic] coordinates { (RandomWalk1M, 110255.6) [110~255.6] (RandomWalk2M, 430711.7) [430~711.7]}; 
			
			\addplot[only marks, nodes near coords, point meta=explicit symbolic, xshift=0pt] coordinates {(RandomWalk1M, 108561.3) [108~561.3] (RandomWalk2M, 428368.0) [428~368.0]};
			
			\addplot[only marks, nodes near coords, point meta=explicit symbolic, xshift=0pt] coordinates {(RandomWalk1M, 107667.0) [107~667.0] (RandomWalk2M, 425177.3) [425~177.3]};
			
			
			\end{axis}
		\end{tikzpicture} 
	\end{subfigure}
	\caption{Scalability of the \ParMERLIN{} algorithm w.r.t. the segment length}
	\label{fig:Scalability}
\end{figure}

In Fig.~\ref{fig:Scalability}, we show experimental results regarding the impact of the segment length on the \ParMERLIN{} performance. It can be seen that the algorithm's running time is proportional to the segment length for both real-world and synthetic time series, and the greater value of the segment length provides higher performance. This can be explained by the fact that when the segment length increases, the overhead of reading and writing segments in the GPU shared memory decreases. Moreover, this was the reason that we took $seglen=512$ in the above-described experiments. 

\begin{figure}[ht!]
	\centering
	\begin{subfigure}{\textwidth}	
		\begin{minipage}[t]{0.49\linewidth}
			\begin{tikzpicture}
				\begin{axis} 
				[
				font=\footnotesize,
				nodes near coords style={above},
				every node near coord/.append style={
					align=center,
					xshift=-5pt,
					yshift=0pt,
					font=\tiny,
					color=black
				},
				height=4.5cm,
				width=\textwidth,
				xlabel={\textbf{Time series length, \(\times 10^3\)}},
				ylabel={\textbf{Runtime, s}},
				ylabel style = {text width=4.5cm, align=center},
				enlarge y limits = {value = .2, upper},
				xtick={0, 5, 10, 25, 50, 75, 100},				
				x tick label style={/pgf/number format/.cd,%
					scaled x ticks = false,
					fixed,
					tick label style={rotate=0},
				},
				]
				
				\addplot[mark=*, OliveGreen, mark options={fill=OliveGreen}, nodes near coords] coordinates {(5, 26.87) (10, 52.70) (25, 131.71) (50, 210.82) (75, 257.05) (100, 324.24)};
				
				\end{axis}
			\end{tikzpicture}
		\end{minipage}
		\hfill
		\begin{minipage}[t]{0.49\linewidth}
			\begin{tikzpicture}
				\begin{axis} [
				font=\footnotesize,
				nodes near coords style={above},
				every node near coord/.append style={
					align=center,
					xshift=-5pt,
					font=\tiny,
					rotate=0,
					/pgf/number format/fixed,
					color=black
				},
				height=4.5cm,
				width=\textwidth,
				xlabel={\textbf{Time series length, \(\times 10^3\)}},
				ylabel style = {text width=4.5cm, align=center},
				ylabel={\textbf{Average runtime\\per discord, s}},
				enlarge y limits = {value = .2, upper},
				xtick={0, 5, 10, 25, 50, 75, 100},				
				x tick label style={/pgf/number format/.cd,%
					scaled x ticks = false,
					fixed,
					tick label style={rotate=0},
				},
				]
				
				\addplot[mark=*, OliveGreen, mark options={fill=OliveGreen}, nodes near coords] coordinates {(5, 0.06) (10, 0.12) (25, 0.29) (50, 0.46) (75, 0.56) (100, 0.71)};
				
				\end{axis}
			\end{tikzpicture} 
		\end{minipage}
		\caption{Real dataset (Koski-ECG, discord range is 458..916)}
	\end{subfigure}
	\hfill
	\begin{subfigure}{\textwidth}	
	\begin{minipage}[t]{0.49\linewidth}
		\begin{tikzpicture}
			\begin{axis} [
				font=\footnotesize,
				nodes near coords style={above},
				every node near coord/.append style={
					align=center,
					xshift=-5pt,
					yshift=0pt,
					font=\tiny,
					color=black,
					/pgf/number format/.cd,
					set thousands separator={\,},
					fixed
				},
				height=4.5cm,
				width=\textwidth,
				xlabel={\textbf{Time series length, \(\times 10^4\)}},
				ylabel={\textbf{Runtime, s}},
				ylabel style = {text width=4.5cm, align=center},
				y tick label style={/pgf/number format/.cd,%
					scaled y ticks = false,
					set thousands separator={\,},
					fixed},
				enlarge y limits = {value = .2, upper},
				xtick={0, 5, 10, 25, 50, 75, 100},				
				x tick label style={/pgf/number format/.cd,%
					scaled x ticks = false,
					fixed,
					tick label style={rotate=0},
				},
				]
				
				\addplot[mark=*, OliveGreen, mark options={fill=OliveGreen}, nodes near coords] coordinates {(5, 58.62) (10, 139.90) (25, 380.25) (50, 731.80) (75, 997.41) (100, 1376.78)};
				
			\end{axis}
		\end{tikzpicture}
	\end{minipage}
	\hfill
	\begin{minipage}[t]{0.49\linewidth}
		\begin{tikzpicture}
			\begin{axis} [
				font=\footnotesize,
				nodes near coords style={above},
				every node near coord/.append style={
					align=center,
					xshift=-5pt,
					font=\tiny,
					rotate=0,
					color=black
				},
				height=4.5cm,
				width=\textwidth,
				xlabel={\textbf{Time series length, \(\times 10^4\)}},
				ylabel style = {text width=4.5cm, align=center},
				ylabel={\textbf{Average runtime\\per discord, s}},
				enlarge y limits = {value = .2, upper},
				xtick={0, 5, 10, 25, 50, 75, 100},				
				x tick label style={/pgf/number format/.cd,%
					scaled x ticks = false,
					fixed,
					tick label style={rotate=0},
				},
				]
				
				\addplot[mark=*, OliveGreen, mark options={fill=OliveGreen}, nodes near coords] coordinates {(5, 0.46) (10, 1.09) (25, 2.97) (50, 5.72) (75, 7.79) (100, 10.76)};
				
			\end{axis}
		\end{tikzpicture} 
	\end{minipage}
	\caption{Synthetic dataset (RandomWalk1M, discord range is 128..256)}
	\end{subfigure}

	\vspace{5pt}
	\caption{Scalability of the \ParMERLIN{} algorithm w.r.t. the time series length}
	\label{fig:Scalability-TimeSeriesLength}
\end{figure}

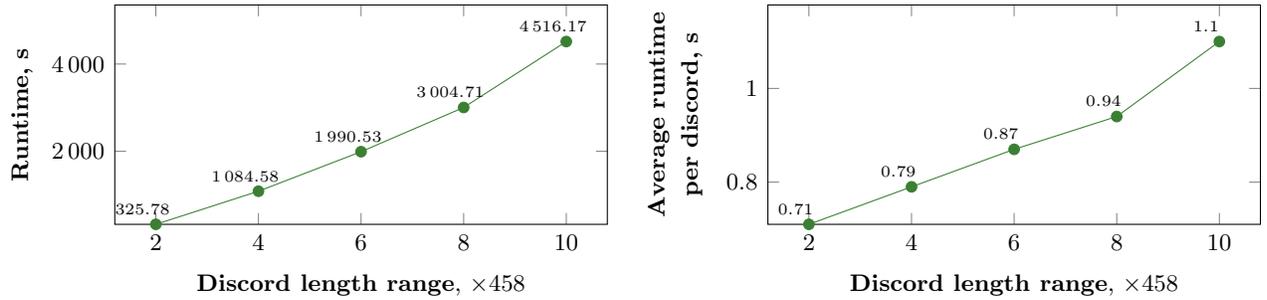
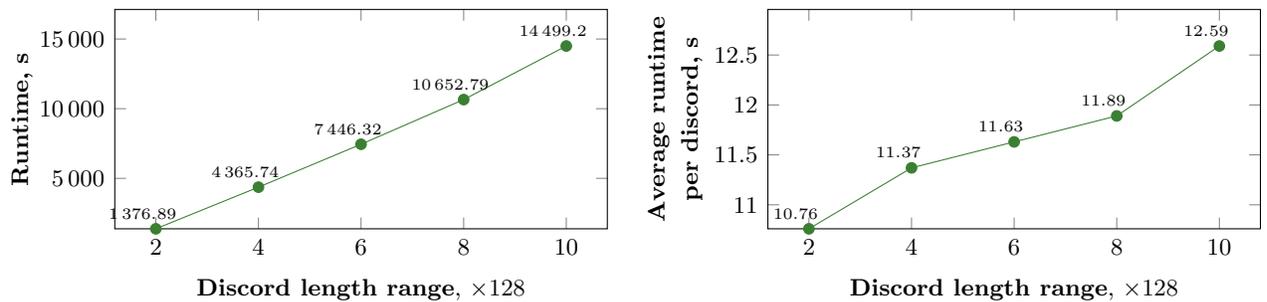
\begin{figure}[ht!]
	\centering
	\begin{subfigure}{\textwidth}	
	\begin{minipage}[t]{0.49\linewidth}
		\begin{tikzpicture}
			\begin{axis} 
			[
			font=\footnotesize,
			nodes near coords style={above},
			every node near coord/.append style={
				align=center,
				xshift=-5pt,
				yshift=0pt,
				font=\tiny,
				color=black,
				/pgf/number format/.cd,
				set thousands separator={\,},
				fixed,
			},
			height=4.5cm,
			width=\textwidth,
			xlabel={\textbf{Discord length range}, \(\times 458\)},
			ylabel={\textbf{Runtime, s}},
			ylabel style = {text width=4.5cm, align=center},
			y tick label style={/pgf/number format/.cd,%
				scaled y ticks = false,
				set thousands separator={\,},
				fixed},
			enlarge y limits = {value = .2, upper},
			]
			
			\addplot[mark=*, OliveGreen, mark options={fill=OliveGreen}, nodes near coords] coordinates {(2, 325.78) (4, 1084.58) (6, 1990.53) (8, 3004.71) (10, 4516.17)};
			
			\end{axis}
		\end{tikzpicture}
		\end{minipage}
		\hfill
		\begin{minipage}[t]{0.49\linewidth}
			\begin{tikzpicture}
				\begin{axis} [
				font=\footnotesize,
				nodes near coords style={above},
				every node near coord/.append style={
					align=center,
					xshift=-5pt,
					font=\tiny,
					rotate=0,
					color=black
				},
				height=4.5cm,
				width=\textwidth,
				xlabel={\textbf{Discord length range}, \(\times 458\)},
				ylabel style = {text width=4.5cm, align=center},
				ylabel={\textbf{Average runtime\\per discord, s}},
				enlarge y limits = {value = .2, upper},
				]
				
				\addplot[mark=*, OliveGreen, mark options={fill=OliveGreen}, nodes near coords] coordinates {(2, 0.71) (4, 0.79) (6, 0.87) (8, 0.94) (10, 1.1)};
				
				\end{axis}
			\end{tikzpicture} 
		\end{minipage}
	\caption{Real dataset (Koski-ECG)}
	\end{subfigure}
	\hfill
	\begin{subfigure}{\textwidth}	
		\centering	
		\begin{minipage}[t]{0.49\linewidth}
		\begin{tikzpicture}
			\begin{axis} [
			font=\footnotesize,
			nodes near coords style={above},
			every node near coord/.append style={
				align=center,
				xshift=-5pt,
				yshift=0pt,
				font=\tiny,
				color=black,
				/pgf/number format/.cd,
				set thousands separator={\,},
				fixed,
			},
			height=4.5cm,
			width=\textwidth,
			xlabel={\textbf{Discord length range}, \(\times 128\)},
			ylabel={\textbf{Runtime, s}},
			ylabel style = {text width=4.5cm, align=center},
			y tick label style={/pgf/number format/.cd,%
				scaled y ticks = false,
				set thousands separator={\,},
				fixed},			
			enlarge y limits = {value = .2, upper},
			]
			
			\addplot[mark=*, OliveGreen, mark options={fill=OliveGreen}, nodes near coords] coordinates {(2, 1 376.89) (4, 4 365.74) (6, 7 446.32) (8, 10 652.79) (10, 14 499.20)};
			
			\end{axis}
		\end{tikzpicture}
	\end{minipage}
	\hfill
	\begin{minipage}[t]{0.49\linewidth}			
		\begin{tikzpicture}
			\begin{axis} [
			font=\footnotesize,
			nodes near coords style={above},
			every node near coord/.append style={
				align=center,
				font=\tiny,
				rotate=0,
				xshift=-5pt,
				color=black
			},
			height=4.5cm,
			width=\textwidth,
			xlabel={\textbf{Discord length range}, \(\times 128\)},
			ylabel style = {text width=4.5cm, align=center},
			ylabel={\textbf{Average runtime\\per discord, s}},
			enlarge y limits = {value = .2, upper},
			]
			
			\addplot[mark=*, OliveGreen, mark options={fill=OliveGreen}, nodes near coords] coordinates {(2, 10.76) (4, 11.37) (6, 11.63) (8, 11.89) (10, 12.59)};
	
			\end{axis}
		\end{tikzpicture} 
		\end{minipage}
	\caption{Synthetic dataset (RandomWalk1M)}
	\end{subfigure}
	\vspace{5pt}
	\caption{Scalability of the \ParMERLIN{} algorithm w.r.t. the discord length range}
	\label{fig:Scalability-DiscordLengthRange}
\end{figure}

In Fig.~\ref{fig:Scalability-TimeSeriesLength} and Fig.~\ref{fig:Scalability-DiscordLengthRange}, we depict the performance of our algorithm depending on the time series length and  on the discord length range, respectively, for the cases of real-world and synthetic data. It can be observed that the algorithm's running time is proportional to the above-mentioned parameters for both real-world and synthetic time series.



\section{Case Study}
\label{sec:CaseStudies}

In this section, we apply \ParMERLIN{} to discover subsequence anomalies in a  real-world time series from a smart heating control system. The PolyTER system~\cite{ZymblerKLKShB20} allows for intelligent monitoring and control of operating conditions of utility systems through the analysis of the data from various IoT sensors installed in the university campus buildings. We took a time series from a temperature sensor installed in a lecture hall and discovered the anomalies in a specified range. The sensor's frequency is 4~times per hour, the time series corresponds to annual measurements (i.e., time series length $n=35~040$), and we search for anomalies that range from 12~hours to 7~days (i.e., $\textit{minL}=48$ and $\textit{maxL}=672$, respectively).

To visualize the results obtained, we propose the discord heatmap technique that illustrates the anomaly score through the intensity of a color, and is somewhat like the motif heatmap~\cite{DBLP:conf/icbk/MadridIMZSK19}. 
Formally speaking, 
we plot a one-color heatmap as a matrix of the size $(\textit{maxL}-\textit{minL}+1)$~$\times$ $(n-\textit{minL})$, where the intensity of a pixel $(m,i)$ shows the anomaly score of the discord $T_{i,\,m} \in D_m$, and the pixel's intensity is calculated as a 
normalization of the discord's distance to its nearest neighbor:
\begin{equation}
	\label{eq:DiscordHeatmap}
	\textit{heatmap}(m,\,i)=\frac{T_{i,\,m}.\textit{nnDist}}{2m},
\end{equation}
where we employ the normalizing divisor $2m$ according to Equation~\ref{eq:EDnorm}.

Despite the fact that we have proposed a visual tool to explore and discover multiple length discords, there is an open question for a practitioner, how we can rank discords of different lengths and extract the most interesting ones. There is a number of a discord attributes that can be suggested to take into account, e.g., its length and index, a distance to its nearest neighbor, a number of its self-matches, etc. However, in this study, we employ a straight-forward approach that considers a discord's interest as a normalized distance to its nearest neighbor comparing such distances among discords having the same index. Thus, the most interesting discord among the ones of different lengths is selected as below:
\begin{equation}
	\label{eq:DiscordInterest}
	\arg \max\limits_{1 \leq i \leq N} \max \limits_{\textit{minL} \leq m \leq \textit{maxL}} \textit{heatmap}(m,\,i).
\end{equation}
Clearly, through Equation~\ref{eq:DiscordInterest}, we can select top\nobreakdash-$k$ interesting discords. In our repository~\cite{KraevaZ22}, the reader can find plots of discord heatmaps and top discords for all the real-world time series listed in Tab.~\ref{tab:Datasets} and these ones from other subject domains.

\begin{figure}[ht!]
	\centering
	\captionsetup{justification=centering}
	\begin{subfigure}{\textwidth}	
		\includegraphics[width=\linewidth]{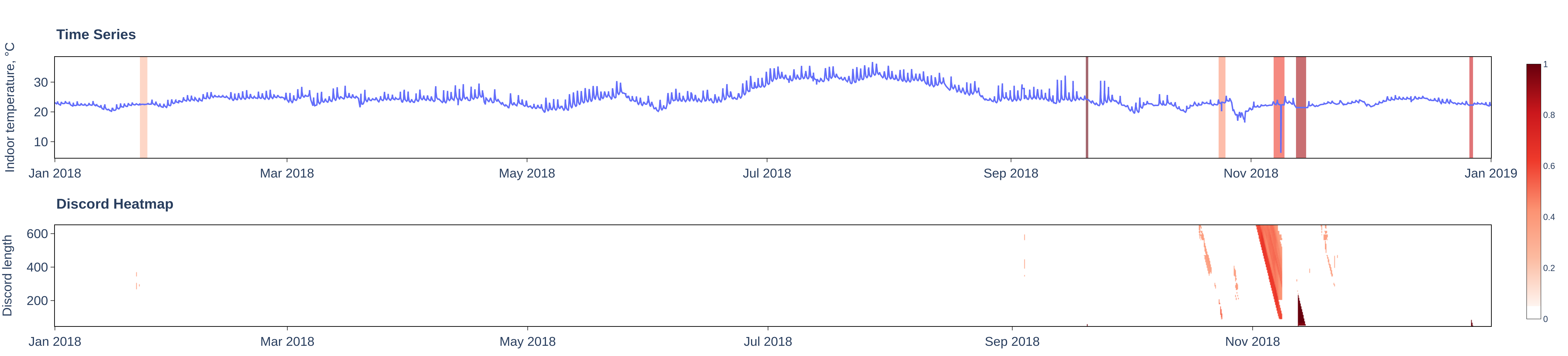} 
		\caption{Time series and its discord heatmap}
		\label{subfig:PolyTER-Heatmap} 
	\end{subfigure}
	\hfill
	\begin{subfigure}{\textwidth}
		\begin{minipage}[t]{0.25\linewidth}	
			\vspace{0pt}
			\includegraphics[width=\linewidth]{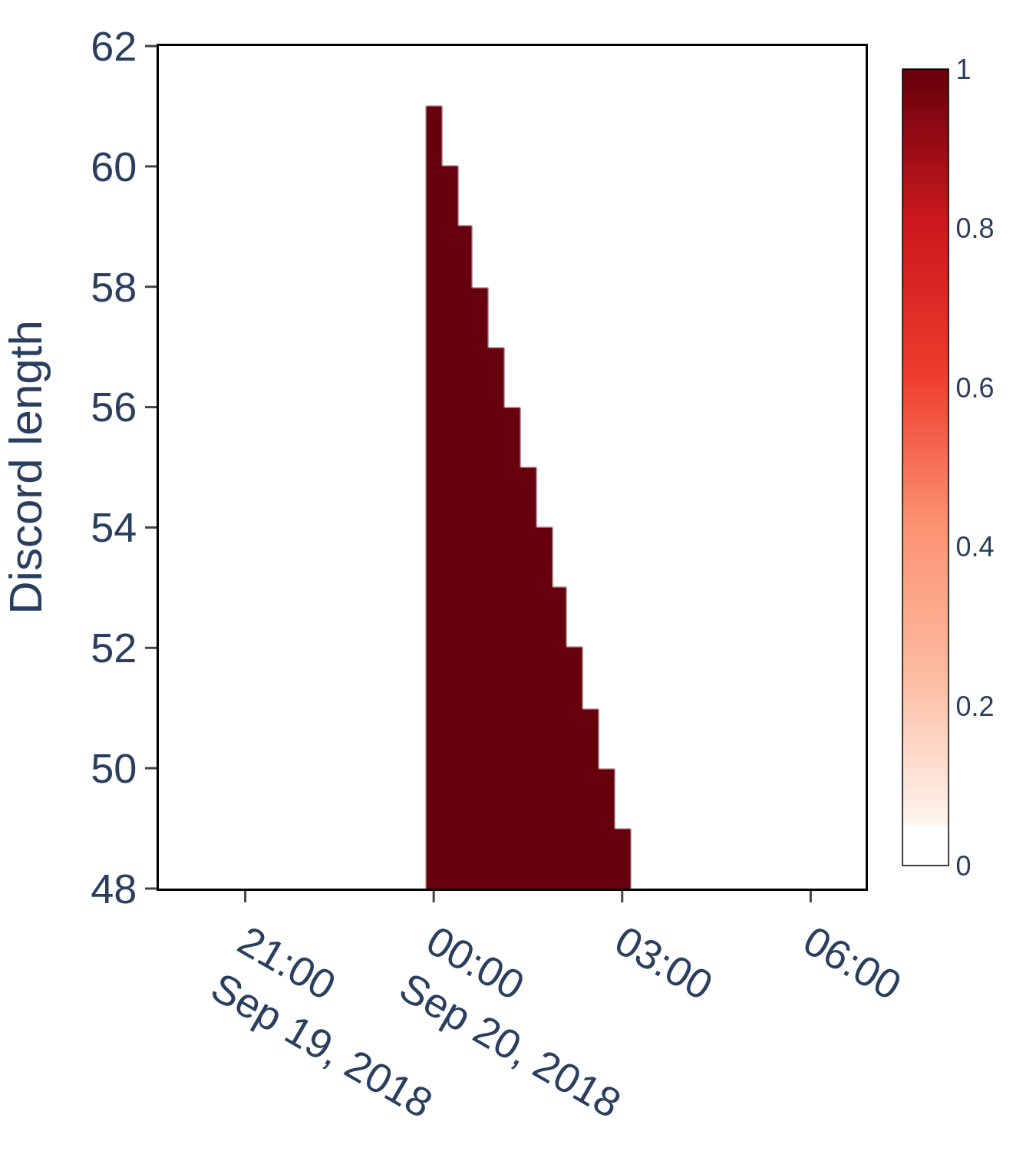}
		\end{minipage}
		\hfill
		\begin{minipage}[t]{0.5\linewidth}
			\vspace{0pt}
			\includegraphics[width=\linewidth]{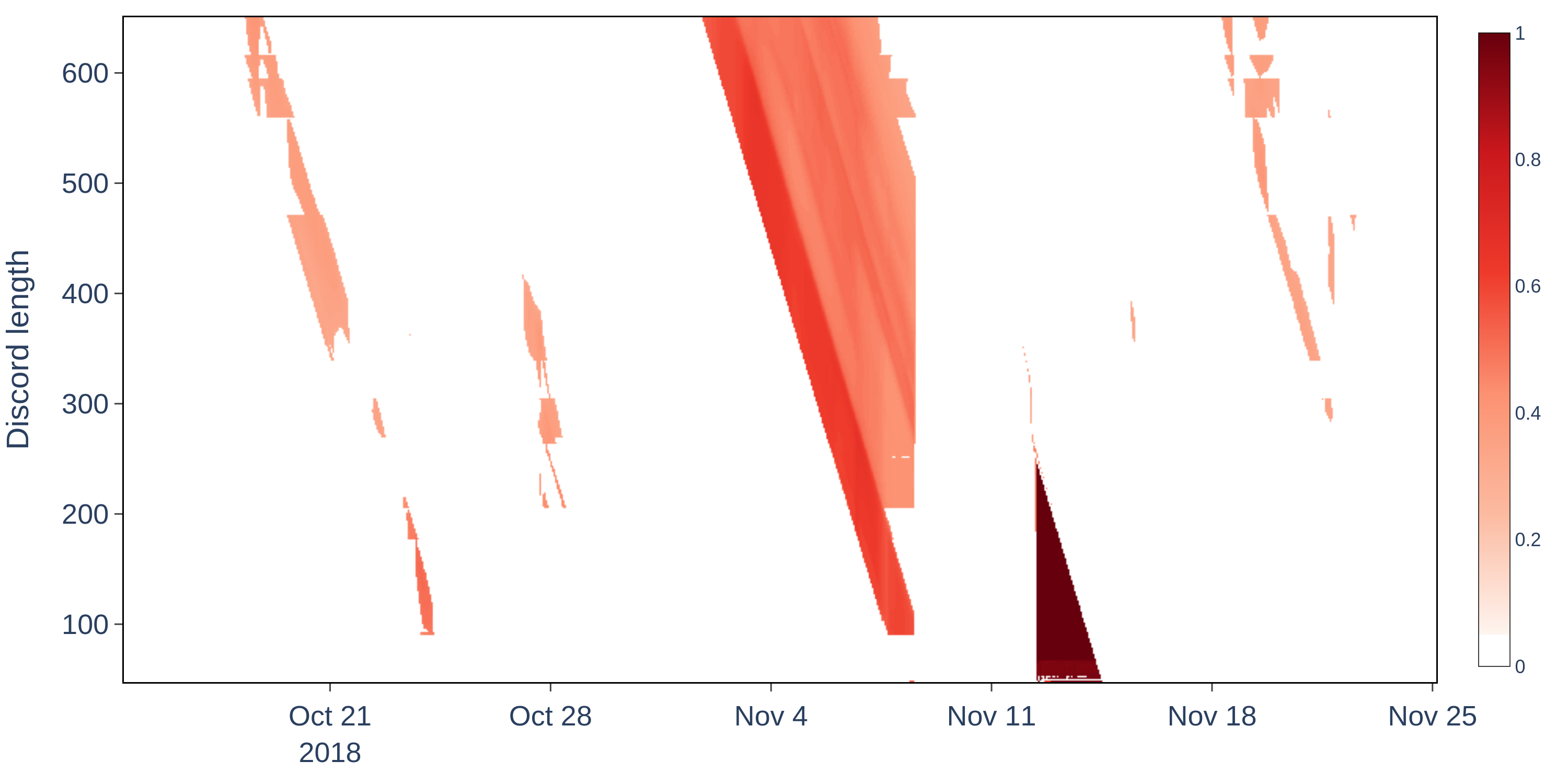}
		\end{minipage}
		\hfill
		\begin{minipage}[t]{0.22\linewidth}
			\vspace{0pt}
			\includegraphics[width=\linewidth]{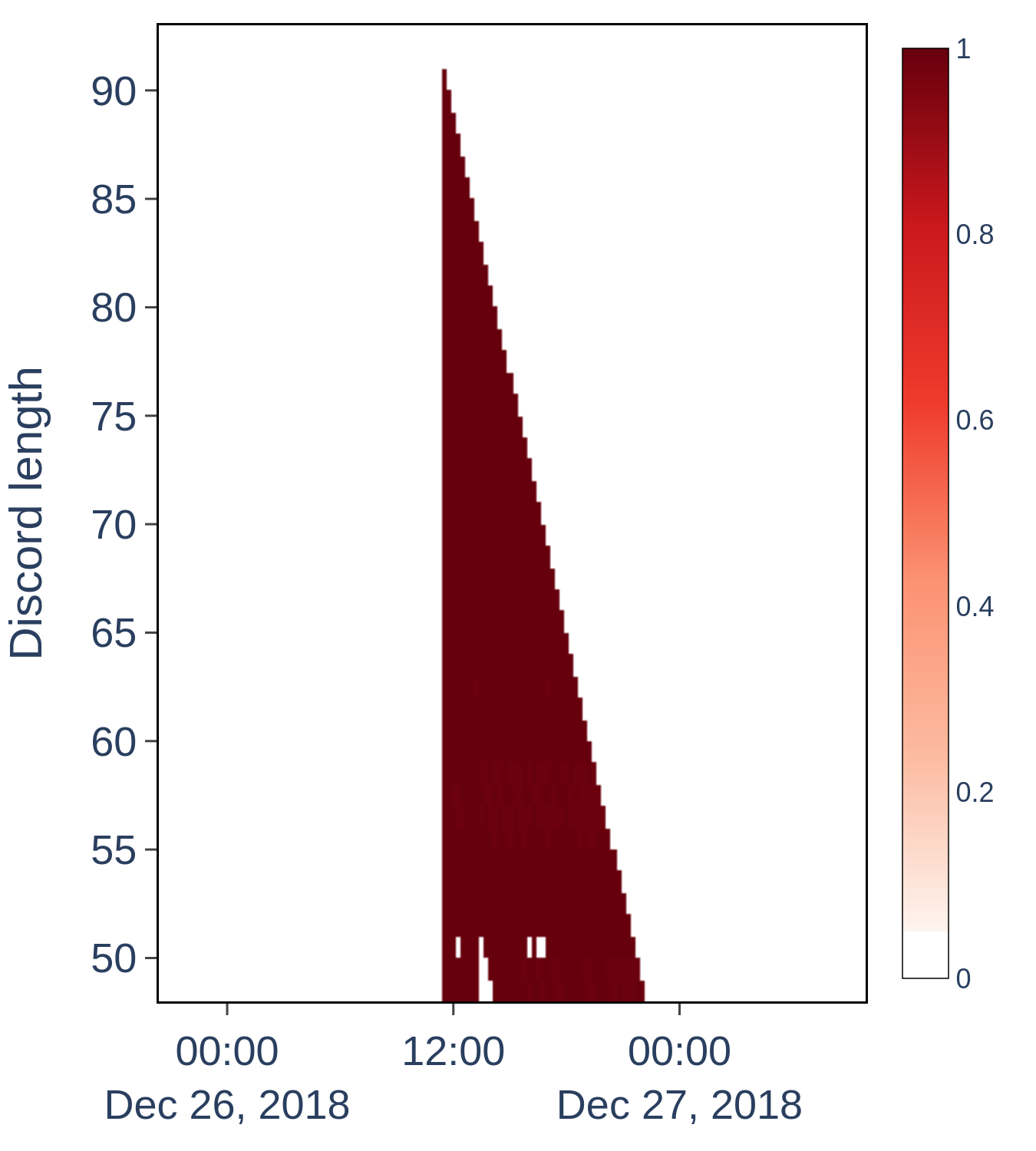}
		\end{minipage}
		\caption{Zooming the most interesting intervals of the discord heatmap}
		\label{subfig:PolyTER-Zoom} 
	\end{subfigure}	
	\hfill	
	\begin{subfigure}{\textwidth}	
		\begin{minipage}[h]{0.32\linewidth}	
			\includegraphics[width=\linewidth]{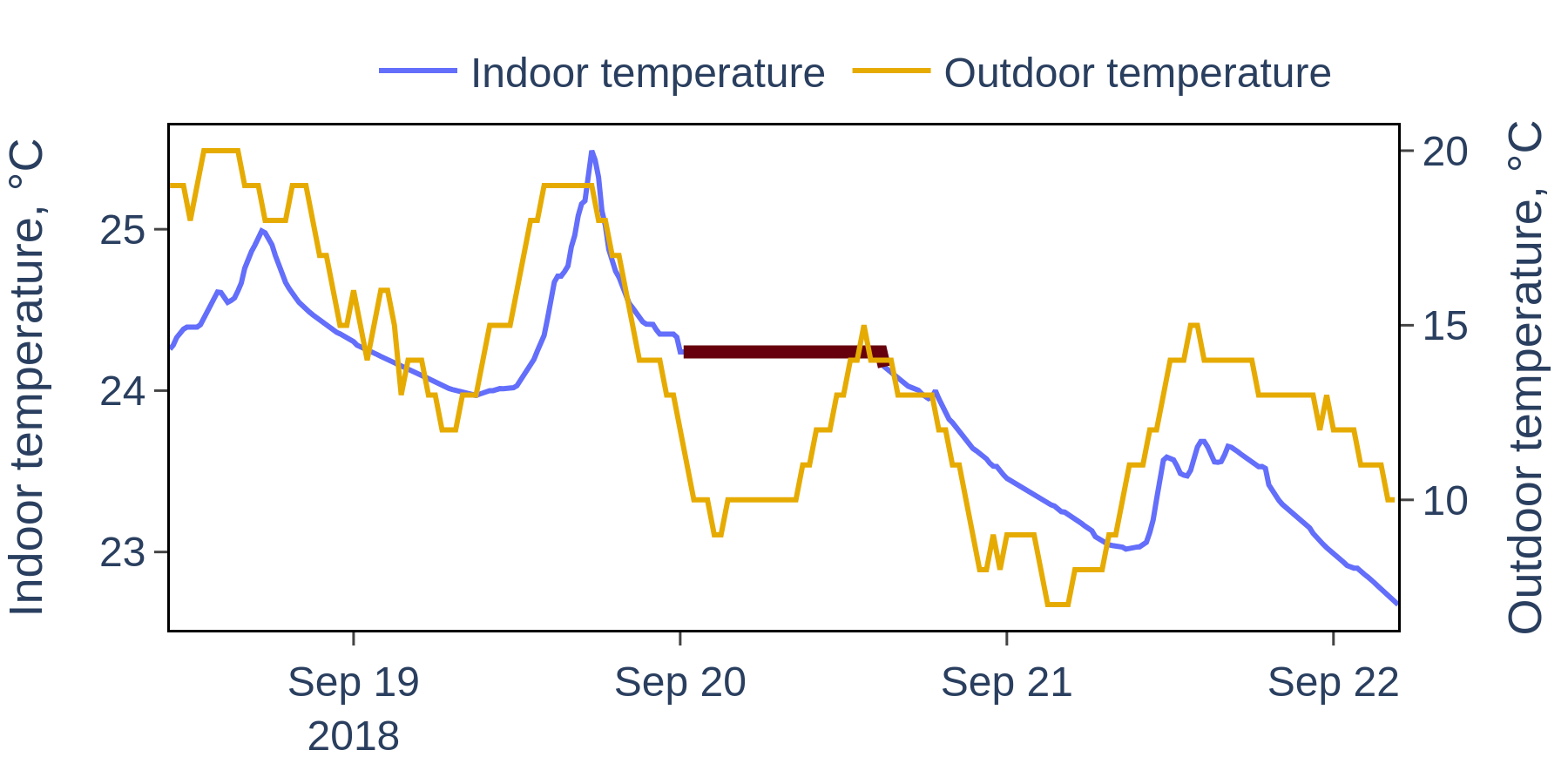}
			\caption*{Top-1 discord, $m=60$\\(15~hr)}
		\end{minipage}
		\hfill
		\begin{minipage}[h]{0.32\linewidth}
			\includegraphics[width=\linewidth]{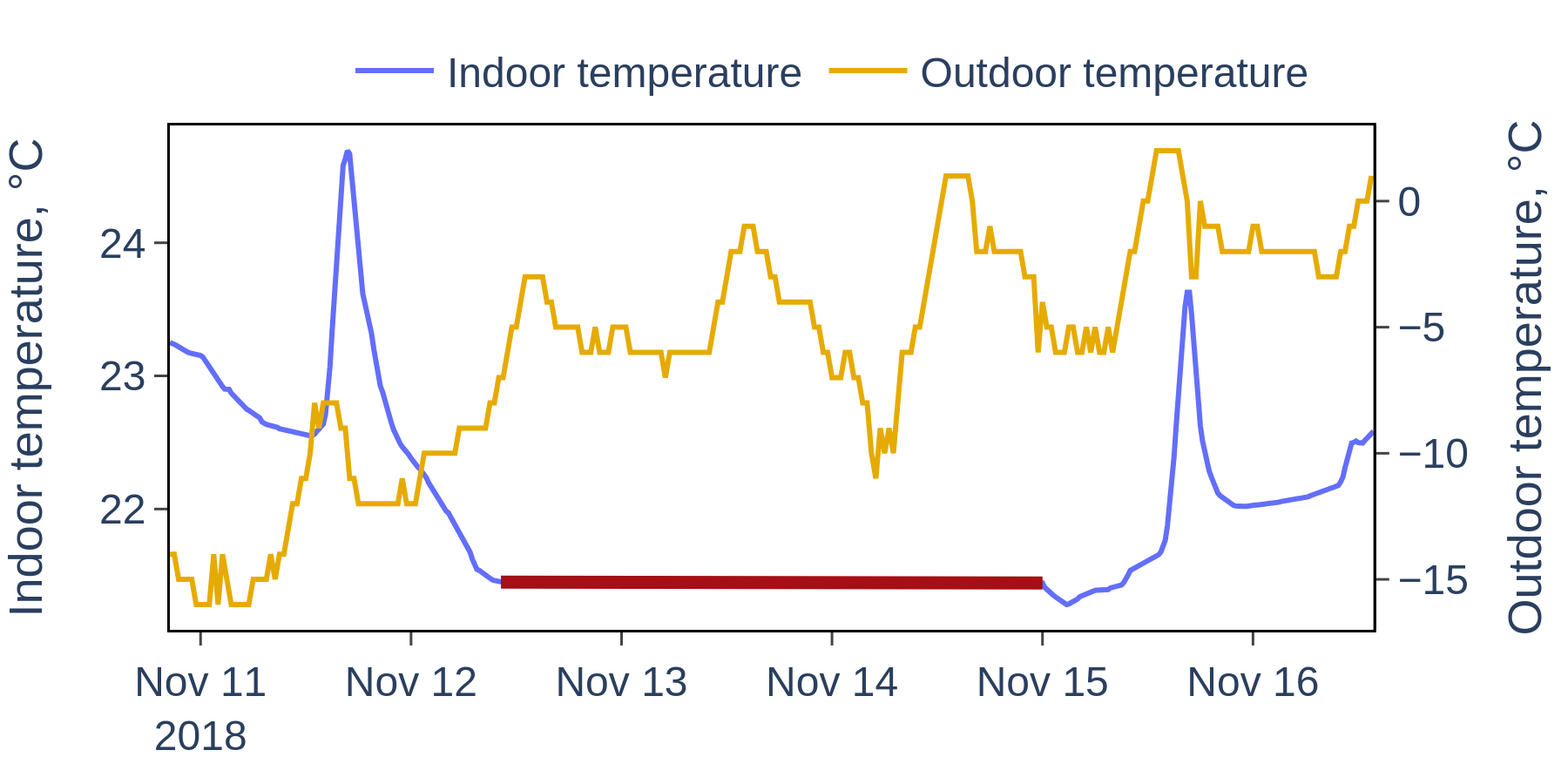}
			\caption*{Top-2 discord, $m=248$\\(2~days, 14~hr)}
		\end{minipage}
		\hfill
		\begin{minipage}[h]{0.32\linewidth}	
			\includegraphics[width=\linewidth]{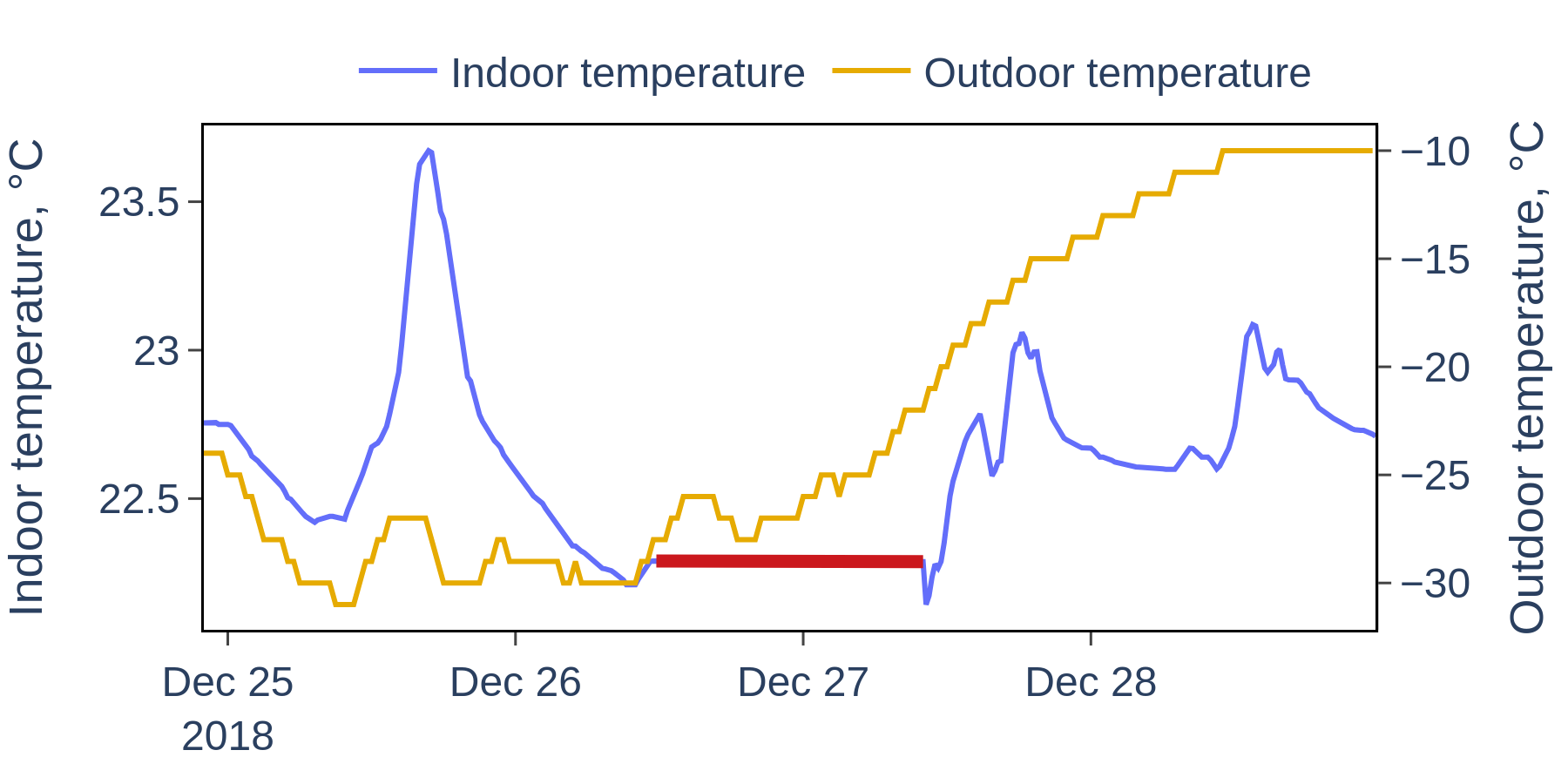}
			\caption*{Top-3 discord, $m=90$\\(22~hr 30~min)}
		\end{minipage}
		\vfill
		\begin{minipage}[h]{0.32\linewidth}	
			\includegraphics[width=\linewidth]{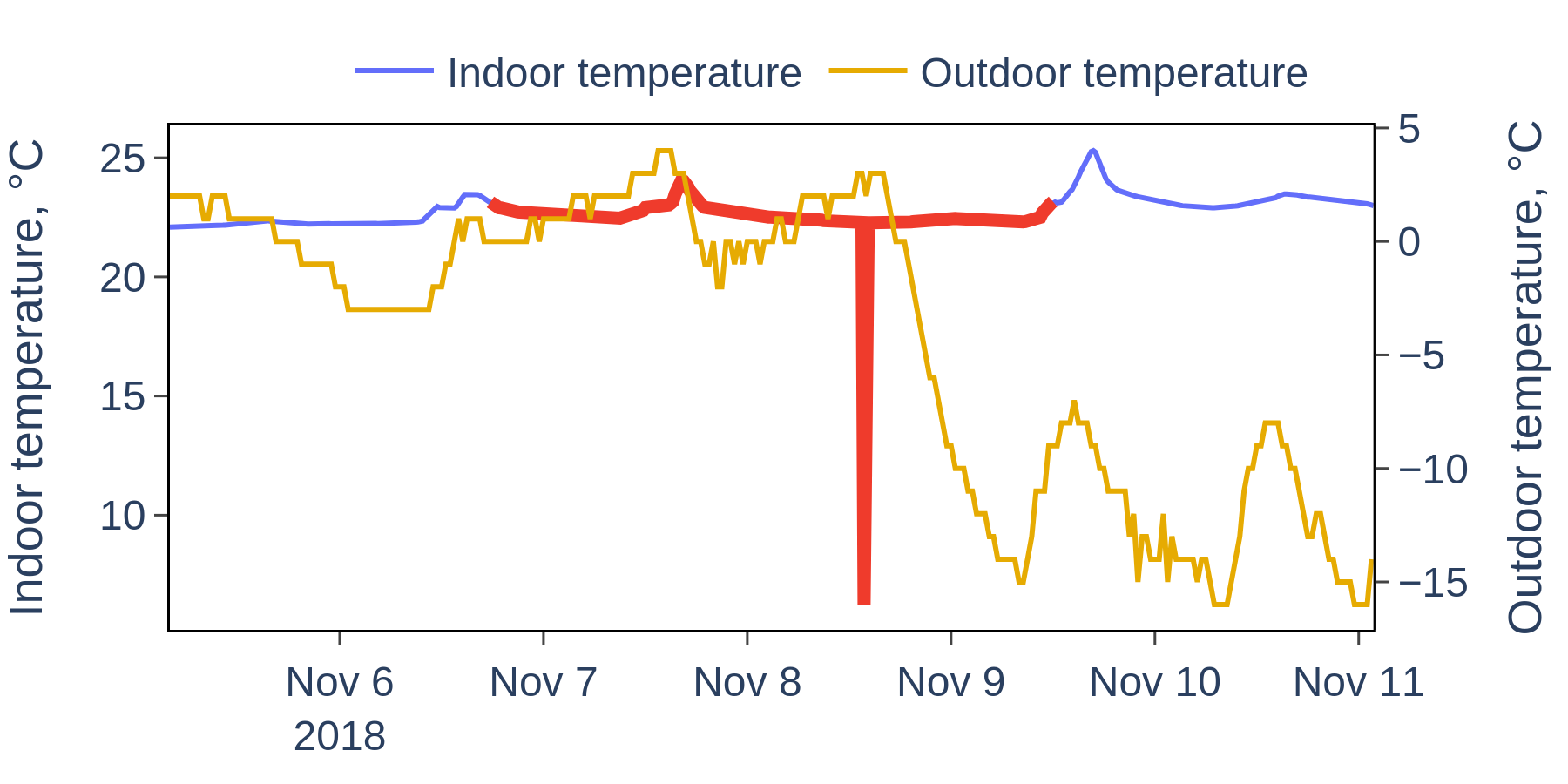}
			\caption*{Top-4 discord, $m=266$\\(2~days, 18~hr  30~min)}
		\end{minipage}
		\hfill
		\begin{minipage}[h]{0.32\linewidth}	
			\includegraphics[width=\linewidth]{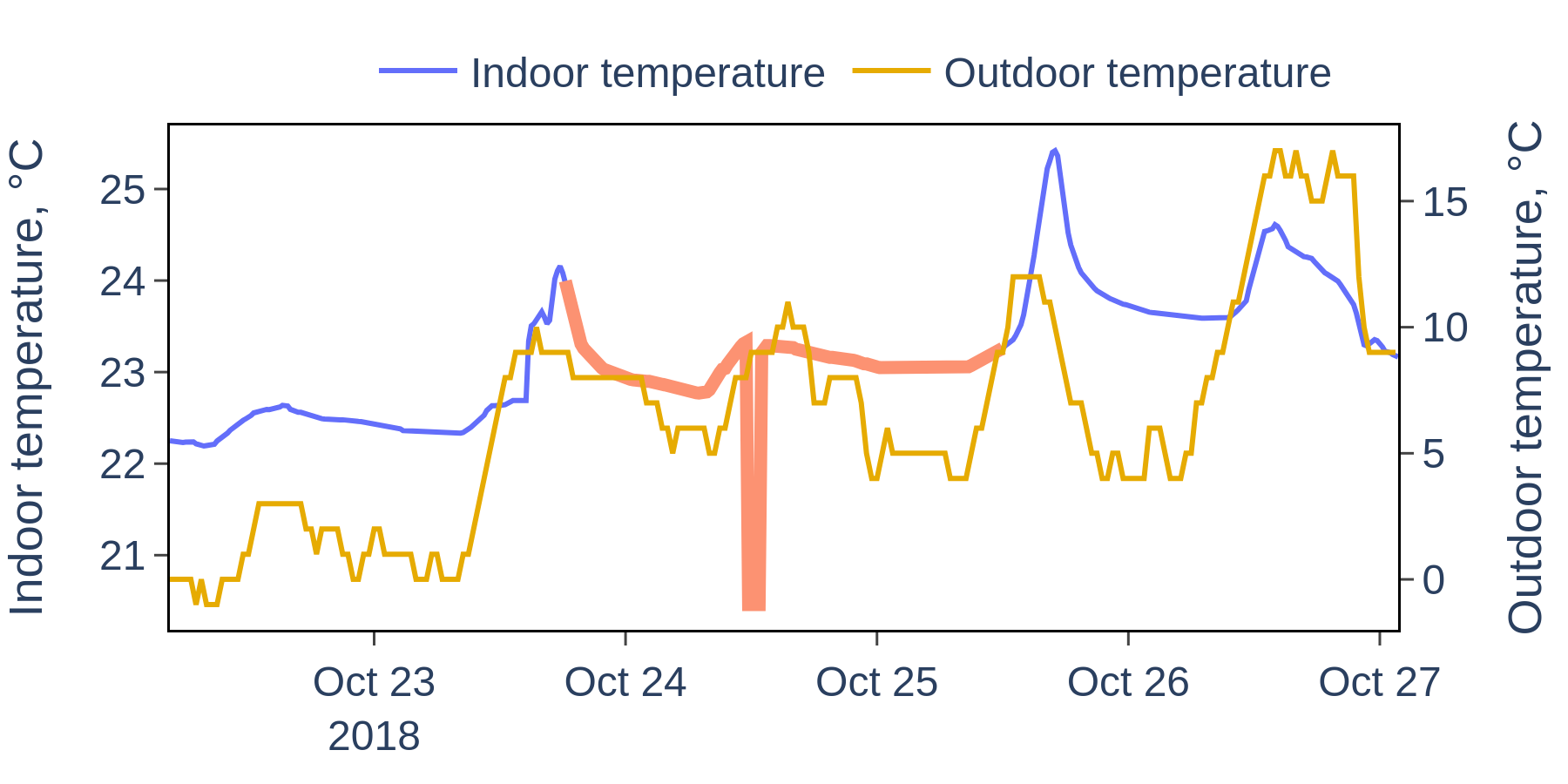}
			\caption*{Top-5 discord, $m=168$\\(1~day, 18~hr)}
		\end{minipage}
		\hfill
		\begin{minipage}[h]{0.32\linewidth}	
			\includegraphics[width=\linewidth]{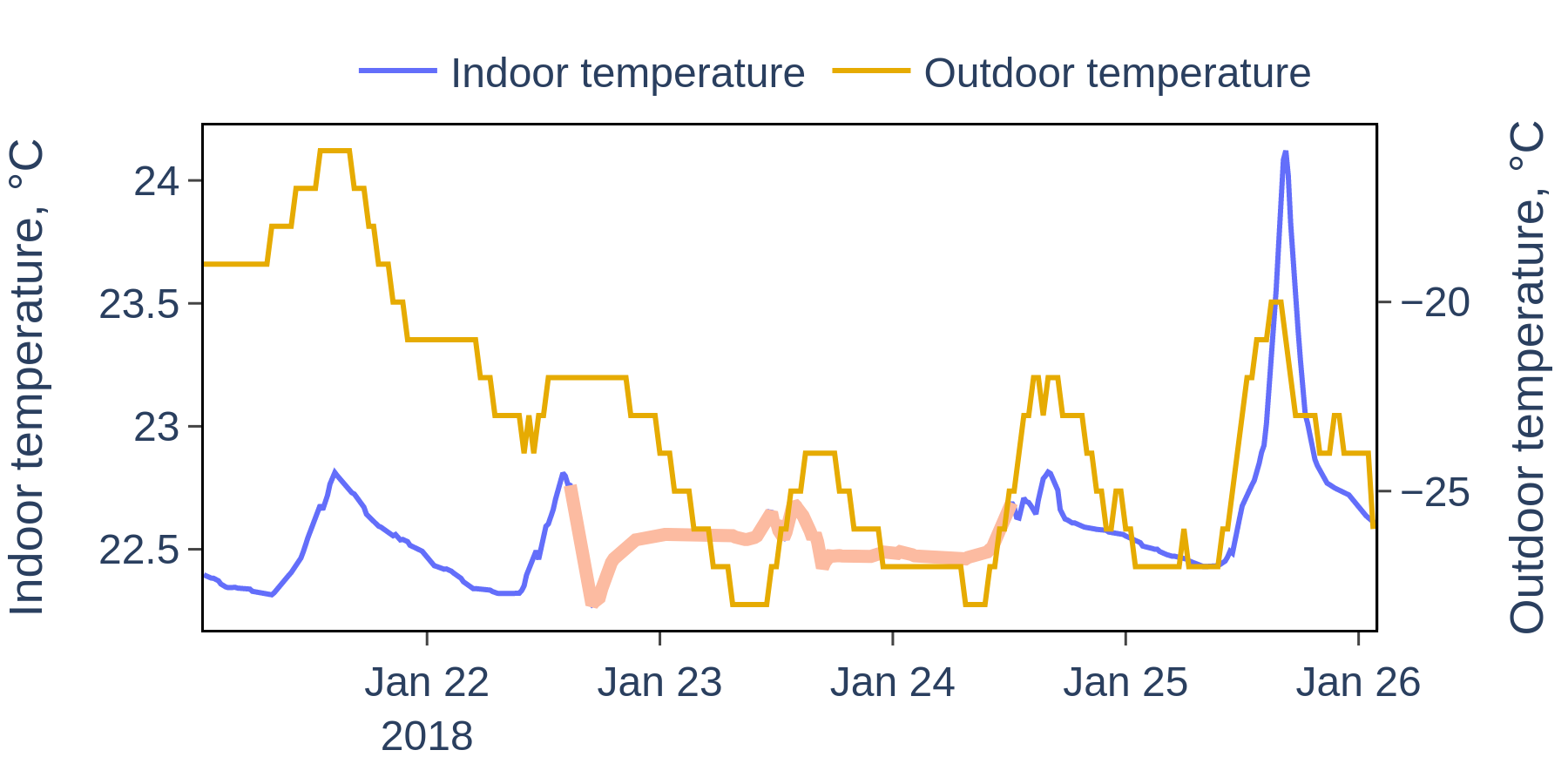}
			\caption*{Top-6 discord, $m=183$\\(1~day, 21~hr  45~min)}
		\end{minipage}
		\vfill
		\caption{Top-6 discords of different lengths}
		\label{subfig:PolyTER-TopDiscords} 
	\end{subfigure}
	\vspace{5pt}
	\caption{Case of PolyTER}
	\label{fig:PolyTER}
\end{figure}

Figure~\ref{fig:PolyTER} summarizes the results of the case study showing the time series and its discord heatmap, zooming of the heatmap intervals with most interesting discords, and top\nobreakdash-6 discords according to Equation~\ref{eq:DiscordInterest} (see sub-figures \ref{subfig:PolyTER-Heatmap}, \ref{subfig:PolyTER-Zoom}, and \ref{subfig:PolyTER-TopDiscords}, respectively). In addition, in top discord plots, we indicate both indoor and outdoor temperature, where the latter is obtained 
from the open weather archive~\cite{WeatherChelyabinsk22}. 
Highly likely, each of top three discords illustrates a long-term malfunction in a temperature sensor that outputs the same measurements during a considerable time period. Next, top\nobreakdash-4 and top\nobreakdash-5 discords show a short-term failure of the sensor. Finally, top\nobreakdash-6 discord may indicate the fact that the system (or its operator) chose an inefficient heating mode of the lecture hall in the considered period.


\section*{Conclusions}
\label{sec:Conclusion}

In this article, we addressed the problem of accelerating the time series subsequence anomaly discovery on a graphics processor. Such an anomaly refers to successive points in time whose collective behavior is abnormal, although each observation individually does not necessarily deviates. Currently, discovering subsequence anomalies in time series remains one of the most topical research problems. 

Among a large number of approaches to discovering subsequence anomalies, the discord concept~\cite{DBLP:conf/cbms/LinKFH05} is considered one of the best. A time series discord is intuitively defined as a subsequence that is maximally far away from its nearest neighbor. However, application of discords is reduced by sensitivity to a user's choice of the subsequence length. A brute-force discovering discords of all the possible lengths and then selecting the best discords with respect to some measure is clearly computationally prohibitive. Recently introduced the MERLIN algorithm~\cite{DBLP:conf/icdm/NakamuraIMK20} discovers time series discords of every possible length in a specified range, being ahead of competitors in terms of accuracy and performance. MERLIN employs repeated calls of the DRAG algorithm~\cite{DBLP:conf/icdm/YankovKR07} that discovers discords of a given length with a distance of at least $r$ to their nearest neighbors, and adaptive selection of the parameter $r$. However, to the best of our knowledge, no research has addressed the accelerating MERLIN with any parallel hardware architecture.

In the article, based on Keogh \textit{et~al.}'s works~\cite{DBLP:conf/icdm/YankovKR07,DBLP:conf/icdm/NakamuraIMK20}, we proposed a novel parallelization scheme \ParMERLIN{} (\ParMERLINtranscript) for a graphics processor. When basically following the original serial algorithm, \ParMERLIN{}, however, employs our derived recurrent formulas to calculate the mean values and standard deviations of subsequences of the time series. Since that data are further involved in calculations of the normalized Euclidean distances between subsequences, eventually, we significantly reduce the amount of calculations. Furthermore, \ParMERLIN{} repeatedly calls \ParDRAG{} (\ParDRAGtranscript{})~\cite{KraevaZ23}, our developed parallel version of the original DRAG algorithm. Similar to its predecessor, \ParDRAG{} performs in two phases. To implement the candidate selection phase, we exploit the data parallelism by division the time series into equal-length segments, where each segment is processed separately by a  block of GPU threads. The thread block considers the segment subsequences as local candidates to discords and processes the subsequences that are located to the right of the segment and do not overlap with the candidates. Next, the thread block scans the subsequences in chunks, the number of elements in which is equal to the segment length, and the first chunk begins with the $m$\nobreakdash-th element in the segment, where $m$ is the discord length. Such a technique allows us for avoiding redundant checks of the fact that candidates and subsequences in chunks overlap. In \ParDRAG{}, the candidate refinement phase is parallelized similar to the selection phase. Refinement involves only those segments of the time series whose set of local candidates is not empty. The algorithm scans and processes the subsequences that do not overlap with the candidates and are located to the left of the segment.

We carried out an extensive experimental evaluation of \ParMERLIN{} over real-world and synthetic time series. In the experiments, we compared our development with two  algorithms, namely, KBF\_GPU~\cite{ThuyAC21} and Zhu \textit{et~al.}'s~\cite{DBLP:journals/tpds/ZhuJGD21} since our thorough review of related work did not reveal other GPU-oriented parallel competitors. Both rivals aim at discovery top\nobreakdash-1 discord, where the former is a parallelization of the brute-force approach while the latter employs computational patterns to reduce amount of calculations. As expected, in the experiments, our algorithm significantly outruns KBF\_GPU. Next, being adapted to discover the most important discord of a specified length, Zhu \textit{et~al.}'s algorithm significantly outruns \ParMERLIN{} that discovers discords of every possible length in a specified range: up to 20~times and up to two orders of magnitude greater over real and synthetic time series, respectively. However, \ParMERLIN{} discovers substantially more discords: at least to two and seven orders of magnitude greater over real and synthetic time series, respectively. Thus, \ParMERLIN{} significantly outruns the rivals in terms of the average running time to discover one discord. Finally, in the experiments, we also investigated the scalability of \ParMERLIN{} and found that the algorithm's running time is proportional to each of the following parameters for both real-world and synthetic time series: the segment length, time series length, and discord range length. 

We also apply \ParMERLIN{} to discover anomalies in a real-world time series from a smart heating control system employing our proposed discord heatmap technique to illustrate the results.

Our further studies might elaborate on the following topics: (a) discords discovery in a large time series that cannot be entirely placed in RAM with a high-performance cluster based on GPU nodes, and (b) application of \ParMERLIN{} in a deep learning-based online time series anomaly detection. 

\section*{Acknowledgment}
\label{sec:Ack}

This work was financially supported by the Russian Science Foundation (grant No.~23-21-00465). The research is carried out using the equipment of the shared research facilities of HPC computing resources at Lomonosov Moscow State University and supercomputer resources of the South Ural State University.


\bibliography{parmerlin}

\linespread{0.6}\selectfont
\small

\section*{Appendix}
\label{sec:Appendix}



\begin{lemma}
Let us have the time series $T$, $|T|=n$, and two its $m$\nobreakdash-length subsequences, $T_{i,\,m}$ and $T_{i,\,m+1}$, where $1 \leq i \leq n-m$ and $3 \leq m \ll n$. Then the following holds:
\[	\mu_{T_{i,\,m+1}}=\dfrac{1}{m+1}\bigl(m\mu_{T_{i,\,m}} +t_{i+m}\bigr), \]
\[	\sigma^2_{T_{i,\,m+1}}=\dfrac{m}{m+1} \Bigl( \sigma_{T_{i,\,m}}^2 + \dfrac{1}{m+1} \bigl(\mu_{T_{i,\,m}}-t_{i+m}\bigr)^2\Bigr). \]	
\end{lemma}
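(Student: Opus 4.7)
The plan is to prove the two identities directly from the definitions in Equation~\ref{eq:zNormalization}, treating the mean formula as a short warm-up and then reducing the variance formula to an algebraic identity in $\mu_{T_{i,\,m}}$, $\sigma^2_{T_{i,\,m}}$, and the newly appended term $t_{i+m}$. Throughout, I will use the fact that $T_{i,\,m+1}$ consists of $T_{i,\,m}$ with $t_{i+m}$ appended, so every sum over $T_{i,\,m+1}$ splits into a sum over $T_{i,\,m}$ plus a single extra term.

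For the mean, I would just split the sum: $\mu_{T_{i,\,m+1}} = \frac{1}{m+1}\sum_{k=i}^{i+m}t_k = \frac{1}{m+1}\bigl(\sum_{k=i}^{i+m-1}t_k + t_{i+m}\bigr) = \frac{1}{m+1}\bigl(m\mu_{T_{i,\,m}} + t_{i+m}\bigr)$, which is the first claimed identity.

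For the variance, the key step is to rewrite the definition $\sigma^2_{T_{i,\,m}} = \frac{1}{m}\sum_{k=i}^{i+m-1}t_k^2 - \mu_{T_{i,\,m}}^2$ as $\sum_{k=i}^{i+m-1}t_k^2 = m\bigl(\sigma^2_{T_{i,\,m}} + \mu_{T_{i,\,m}}^2\bigr)$ and substitute into $\sigma^2_{T_{i,\,m+1}} = \frac{1}{m+1}\bigl(\sum_{k=i}^{i+m-1}t_k^2 + t_{i+m}^2\bigr) - \mu_{T_{i,\,m+1}}^2$. Using the mean identity already established to expand $\mu_{T_{i,\,m+1}}^2 = \frac{(m\mu_{T_{i,\,m}}+t_{i+m})^2}{(m+1)^2}$, I would then combine everything over the common denominator $(m+1)^2$ and collect terms. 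After expanding $(m\mu_{T_{i,\,m}}+t_{i+m})^2$ and simplifying, the coefficient of $\mu_{T_{i,\,m}}^2$ in the numerator reduces from $m(m+1)-m^2 = m$, and the remaining $\mu_{T_{i,\,m}}$- and $t_{i+m}$-dependent terms reorganize into $m(\mu_{T_{i,\,m}}-t_{i+m})^2$. Together with the $m(m+1)\sigma^2_{T_{i,\,m}}$ contribution, the numerator factors as $m(m+1)\sigma^2_{T_{i,\,m}} + m(\mu_{T_{i,\,m}}-t_{i+m})^2$, and dividing by $(m+1)^2$ yields exactly the claimed formula.

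The whole argument is elementary; the only real obstacle is the bookkeeping of the cross terms when expanding $\mu_{T_{i,\,m+1}}^2$, where a sign error or a missing factor of $m$ would break the clean $(\mu_{T_{i,\,m}}-t_{i+m})^2$ cancellation. I would therefore organize the calculation so that the $\mu_{T_{i,\,m}}^2$ coefficients, the $\mu_{T_{i,\,m}}t_{i+m}$ coefficients, and the $t_{i+m}^2$ coefficients are tracked separately, and verify that they match the $1$, $-2$, $1$ pattern of a squared difference (scaled by $m$) before recombining.
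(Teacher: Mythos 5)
Your proposal is correct and follows essentially the same route as the paper's own proof: both establish the mean identity by splitting the sum, then rewrite $\sum t_k^2 = m\bigl(\sigma^2_{T_{i,\,m}}+\mu^2_{T_{i,\,m}}\bigr)$, substitute into the definition of $\sigma^2_{T_{i,\,m+1}}$ together with the mean identity, and collect terms over $(m+1)^2$ so that the numerator factors as $m(m+1)\sigma^2_{T_{i,\,m}}+m\bigl(\mu_{T_{i,\,m}}-t_{i+m}\bigr)^2$. Your explicit tracking of the $1,-2,1$ coefficient pattern is a sound way to organize the same computation the paper carries out.
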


\begin{proof}
First, let us prove the equation on the mean value. According to the definition
\[	\mu_{T_{i,\,m}}=\dfrac{1}{m}\sum_{k=0}^{m}t_{i+k}.\]
Then
\[	\sum_{k=0}^{m}t_{i+k}=m\mu_{T_{i,\,m}}.\]
Next, let us consider $\mu_{T_{i,\,m+1}}$:
\[ \mu_{T_{i,\,m+1}}=\dfrac{1}{m+1}\sum_{k=0}^{m}t_{i+k}=\dfrac{1}{m+1}(m\mu_{T_{i,\,m}} + t_{i+m}), 
\] 
so, the first equation is proved. 
Further, let us prove the equation on the standard deviation. According to the definition
\[\sigma^2_{T_{i,\,m}}=\dfrac{1}{m} \sum_{k=0}^{m-1}t^2_{i+k}-\mu^2_{T_{i,\,m}}. 
\]
Then 
\[
\sum_{k=0}^{m-1}t^2_{i+k}=m(\sigma^2_{T_{i,\,m}}+\mu^2_{T_{i,\,m}}). 
\]
Next, let us consider $\sigma^2_{T_{i,\,m+1}}$:
\[ \sigma^2_{T_{i,\,m+1}}=\dfrac{1}{m+1}\sum_{k=0}^{m}t^2_{i+k}-\mu^2_{T_{i,\,m+1}}=\dfrac{1}{m+1}\bigl(m(\sigma^2_{T_{i,\,m}}+\mu^2_{T_{i,\,m}})+t^2_{i+m}\bigr)-\mu^2_{T_{i,\,m}}.
\]
Employing the above-proved proposition on the mean value, we obtain
\[ \sigma^2_{T_{i,\,m+1}}=\dfrac{1}{m+1}\bigl(m(\sigma^2_{T_{i,\,m}}+\mu^2_{T_{i,\,m}})+t^2_{i+m}\bigr)-\bigl(\dfrac{1}{m+1}(m\mu_{T_{i,\,m}} + t_{i+m})\bigr)^2.
\]
Performing operations in parentheses and further collecting terms, we obtain
\[
\sigma^2_{T_{i,\,m+1}}=\dfrac{1}{m+1}\bigl(m(\sigma^2_{T_{i,\,m}}+\mu^2_{T_{i,\,m}})+t^2_{i+m}-\dfrac{1}{m+1}(m\mu_{T_{i,\,m}} + t_{i+m})^2\bigr)=
\]
\[
=\dfrac{1}{m+1}\Bigl(m\sigma^2_{T_{i,\,m}} + \dfrac{1}{m+1}\bigl(m(m+1)\mu^2_{T_{i,\,m}} - m^2\mu_{T_{i,\,m}} + (m+1)t^2_{i+m}-t^2_{i+m} - 2m\mu_{T_{i,\,m}}t_{i+m} \bigr)\Bigr)=
\]
\[
=\dfrac{1}{m+1}\Bigl(m\sigma^2_{T_{i,\,m}} + \dfrac{m}{m+1}\bigl(\mu^2_{T_{i,\,m}} - 2\mu_{T_{i,\,m}}t_{i+m} + t^2_{i+m}  \bigr)\Bigr)=
\]
\[
=\dfrac{m}{m+1}\Bigl(\sigma^2_{T_{i,\,m}} + \dfrac{1}{m+1}\bigl(\mu_{T_{i,\,m}} - t_{i+m}  \bigr)^2\Bigr).
\]
This concludes our proof.
\end{proof}

\end{document}